\newtheorem{assumption}[theorem]{Assumption}
\renewenvironment{proof}{\vspace{-2mm}\begin{pf}}{\qed\end{pf}}
\DeclareMathOperator{\fo}{FO}
\newcommand{\Q}{{\sf Q}}
\newcommand{\Br}{{\sf Br}}
\newcommand{\SB}{\{\,}%
\newcommand{\SM}{\;{:}\;}%
\newcommand{\SE}{\,\}}%
\newcommand{\SBs}{\{}%
\newcommand{\SEs}{\}}%
\renewcommand{\P}{\text{\normalfont P}}
\newcommand{\NP}{\text{\normalfont NP}}
\newcommand{\NPI}{\text{\normalfont NPI}}
 \newcommand{\FFF}{\mathcal{F}}
\newcommand{\QQQ}{\mathcal{Q}}
\newcommand{\mtext}[1]{\text{\normalfont #1}}
\newcommand{\kclique}[0]{\ensuremath{k\mtext{\sc -clique}}}
\author{Ronald de Haan \inst{1} and Jakub Szymanik \inst{2}}
\title{Characterizing Polynomial Ramsey Quantifiers}
\institute{Algorithms and Complexity Group, Vienna University of Technology \\\mailrh
\and Institute for Logic, Language and Computation, University of Amsterdam \\\mailjs}
\begin{document}

\maketitle

\begin{abstract}
Ramsey quantifiers are a natural object of study not only for logic and computer science, but also for the formal semantics of natural language. Restricting attention to finite models leads to the natural question whether all Ramsey quantifiers are either polynomial-time computable or NP-hard, and whether we can give a natural characterization of the polynomial-time computable quantifiers. In this paper, we first show that there exist intermediate Ramsey quantifiers and then we prove a dichotomy result for a large and natural class of Ramsey quantifiers, based on a reasonable and widely-believed complexity assumption. We show that the polynomial-time computable quantifiers in this class are exactly the constant-log-bounded Ramsey quantifiers.
\end{abstract}

\section{Motivations}
Traditionally, definability questions have been  a mathematical core of generalized quantifier theory. For example, over the years efforts have been directed to classify quantifier constructions with respect to their expressive power (see \cite{PW06} for an extensive overview). Another already classical feature of the theory is searching for linguistic and later computer-science applications. That is one of the reasons to often investigate quantifiers over finite models.
This leads naturally to questions about computational complexity (see \cite{Szymanik:2015kq} for an extensive overview). In previous work \cite{BG86,MW04,Sev06}, it has been observed that some generalized quantifier constructions, when assuming their branching interpretation, are NP-complete.\footnote{Sevenster has also proved a dichotomy theorem for independent-friendly quantifier prefixes that can capture branching quantification, namely they are either decidable in LOGSPACE or NP-hard \cite{Sevenster:2014uq}.}  
Following this line of research, Szymanik \cite{Szym09,Szy2011} searched for more natural classes of intractable generalized quantifiers. He found out that some natural language reciprocal sentences with quantified antecedents under the strong interpretation (see \cite {DKKMP98}) define NP-complete classes of finite models.\footnote{These results have interestingly also found empirical interpretations, see \cite{Schlotterbeck:2013rz,Thorne:2015kx}.} From a mathematical perspective these constructions correspond to Ramsey quantifiers. This leads to a natural mathematical question about characterization of Ramsey quantifiers with respect to their computational complexity.

In this technical paper we first  study some natural polynomial and NP-hard cases of Ramsey quantifiers. Next we ask whether all Ramsey quantifiers are either polynomial-time computable or NP-hard. We give a negative answer by showing that there exist intermediate Ramsey quantifiers -- i.e., Ramsey quantifiers that are neither polynomial-time computable nor NP-complete -- assuming a reasonable and widely-believed complexity assumption (the Exponential Time Hypothesis). This leads to another question:  is there a natural characterization of the polynomial Ramsey quantifiers? To positively answer that question we show that the Ramseyification of constant-log-bounded monadic quantifiers (based on polynomial-time computable threshold functions) results in polynomial time computable Ramsey quantifiers, while the Ramseyification of monadic quantifiers that are not constant-log-bounded are not polynomial-time solvable, assuming the Exponential Time Hypothesis.
That is, we give a dichotomy result where we can identify exactly which quantifiers lead to a polynomial-time solvable problem
-- this dichotomy result is based on the Exponential-Time Hypothesis, whereas similar dichotomy results in the literature are typically based on the complexity-theoretic assumption that~$\P \neq \NP$.
The notion of constant-log-boundedness is a version of the boundedness condition known from finite-model theory literature  \cite{Van97}, where the bound on the upper side is replaced by $c \log n$ for some constant~$c$, where $n$ is the size of the universe. As the property of boundedness plays an important role in definability theory of polyadic quantifiers \cite{HVW97}, we conclude by asking whether the new notion of constant-log-boundedness gives rise to some interesting descriptive results.


\section{Preliminaries}

\subsection{Generalized Quantifiers}
Generalized quantifiers might be defined as classes of models (see, e.g.,~\cite{PW06}). The formal definition is as follows:
\begin{definition}[\cite{Lin66}]
\label{defGQs}
Let \(t=(n_1, \ldots, n_k)\) be a $k$-tuple of positive integers. {\it A generalized quantifier} of type  
\(t\) is a class \(\Q\) of models of a vocabulary \(\tau_t=\{R_1, \ldots, R_k\}\), such that \(R_i\) is \(n_i\)-ary for \(1\le i\le k \), and \(\Q\) is closed under isomorphisms\index{isomorphism}, i.e. if $\mathbb{M}$ and $\mathbb{M'}$ are isomorphic, then \[(\mathbb{M} \in \Q \iff \mathbb{M'} \in \Q).\]
\end{definition}

Finite models can be encoded as finite strings over some vocabulary, hence, we can easily fit the notions into the descriptive complexity paradigm  (see e.g.~\cite{Imm98}):


\begin{definition}
By the {\it complexity of a quantifier $\Q$} we mean the computational complexity of the corresponding class of finite models, that is, the complexity of deciding whether a given finite model belongs to this class.
\end{definition}


For some interesting early results on the computational complexity of  various forms of quantification, see the work of Blass and Gurevich \cite{BG86}.

\subsection{Computational Complexity}\label{NPI}

We assume that the reader is familiar with the complexity classes
\P {} and NP (for an introduction to these classes, we refer
to textbooks on complexity theory, e.g.,~\cite{AroraBarak09}).

Problems in \NP{} that are neither in \P{} nor \NP{}-complete are called \NP{}-intermediate, and the class of such problems is called \NPI{}. Ladner \cite{Ladner:1975ty} proved the following seminal result:

\begin{theorem}[\cite{Ladner:1975ty}]
If $\P \neq \NP$, then \NPI{} is not empty.
\end{theorem}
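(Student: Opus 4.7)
The plan is to prove Ladner's theorem by a delayed diagonalization argument that constructs an explicit \NP-intermediate language. The target will be a ``thinned'' variant of \SAT{} of the form $L = \SB x \SM x \in \SAT \text{ and } f(|x|) \text{ is even} \SE$, where $f \colon \mathbb{N} \to \mathbb{N}$ is a nondecreasing, polynomial-time computable function built stage by stage. Membership $L \in \NP$ is then immediate from polynomial-time computability of $f$ combined with the \NP{} check for \SAT, so the remaining work is to design $f$ so that $L$ simultaneously evades every polynomial-time algorithm (ensuring $L \notin \P$) and every polynomial-time reduction from \SAT{} to $L$ (ensuring $L$ is not \NP-hard).

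First I would fix standard clocked enumerations $M_1, M_2, \ldots$ of polynomial-time deterministic Turing machines and $R_1, R_2, \ldots$ of polynomial-time reductions, and then define $f$ so that its value on input $n$ encodes the current ``stage'': even stages $2i$ correspond to the goal ``refute $M_i$ as a decider for $L$'', and odd stages $2i+1$ to the goal ``refute $R_i$ as a reduction from \SAT{} to $L$''. Concretely, one sets $f(n) = f(n-1)$ by default, but within a small polynomial-time budget searches for a witness that advances the stage: for even stages, an input $x$ with $M_i(x) \neq L(x)$; for odd stages, an $x$ with $x \in \SAT$ iff $R_i(x) \notin L$.

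Under the assumption $\P \neq \NP$, the key claim would be that $f(n) \to \infty$. If $f$ stalled forever at some even stage $2i$, then $M_i$ would correctly decide $L$ almost everywhere, and because $L$ on even-stage inputs coincides with \SAT, this would yield a polynomial-time algorithm for \SAT. If instead $f$ stalled at some odd stage $2i+1$, then $L$ would be empty on all sufficiently large inputs, hence $L \in \P$, while $R_i$ would remain an unrefuted polynomial-time reduction from \SAT{} to $L$, again forcing $\SAT \in \P$. Both cases contradict the hypothesis, so $f$ is unbounded and each $M_i$ and each $R_i$ is refuted at some finite stage. This gives $L \notin \P$ and $L$ not \NP-hard simultaneously, so $L \in \NPI$.

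The main obstacle will be the bookkeeping that makes this informal description rigorous. One has to pick polynomial time budgets for the stage-advancing witness search so that they are (a) large enough for any genuine counterexample to be discovered at some scale, (b) small enough for $f$ itself to be polynomial-time computable, and (c) consistent across scales so that $f$ remains monotone and well-defined on all inputs (i.e., once a stage has been advanced, the shift is preserved as $n$ grows). This delicate balancing is the characteristic feature of Ladner's original proof; once it is set up correctly, the verification that $L \in \NP \setminus (\P \cup \NP\text{-complete})$ follows directly from the diagonalization invariants.
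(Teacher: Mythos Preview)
The paper does not actually prove this statement: it is quoted as a cited result from Ladner's original paper, followed only by the one-sentence remark that ``Assuming $\P \neq \NP$, Ladner constructed an artificial \NPI{} problem.'' There is therefore nothing in the paper to compare your proposal against at the level of proof details.

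That said, your outline is a faithful sketch of Ladner's original delayed diagonalization argument, and the high-level structure is correct: the padded language $L = \SB x \SM x \in \SAT \text{ and } f(|x|) \text{ even} \SE$, the alternation between refuting deciders $M_i$ on even stages and refuting reductions $R_i$ on odd stages, and the case analysis showing that a stalled $f$ would force $\SAT \in \P$ are all standard. The one point where your sketch is slightly glib is the circularity in the witness search: checking ``$M_i(x) \neq L(x)$'' or ``$x \in \SAT \iff R_i(x) \notin L$'' requires knowing $L$ on inputs shorter than the current length, which in turn requires knowing $f$ on smaller arguments. The usual fix is to restrict the search at scale $n$ to inputs $x$ of length at most roughly $\log n$, so that $f(|x|)$ and $\SAT$-membership can be recomputed from scratch within the polynomial budget; you allude to this under ``bookkeeping'' but do not say it explicitly. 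Once that is pinned down, the argument goes through as you describe.
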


Assuming $\P \neq \NP$, Ladner constructed an artificial \NPI{} problem.
Schaefer~\cite{Schaefer:1978rz} proved a dichotomy theorem for Boolean constraint satisfaction, thereby providing conditions under which classes of Boolean constraint satisfaction problems can not be in \NPI{}. It remains an interesting open question whether there are natural problems in \NPI{}~\cite{GKL07}.

\subsubsection{Asymptotic growth rates}

In this paper, we will use the concepts of ``big-O'',
``big-Omega,'' and ``little-o,'' that can be used to compare the asymptotic
growth rates of functions~$f : \mathbb{N} \rightarrow \mathbb{N}$
to each other.
These are defined as follows.

\begin{definition}
Let~$f,g : \mathbb{N} \rightarrow \mathbb{N}$ be computable functions.
Then~$f(n) \in O(g(n))$ if there exists some~$n_0 \in \mathbb{N}$
and some~$c \in \mathbb{N}$
such that for all~$n \in \mathbb{N}$ with~$n \geq n_0$ it holds that:
\[ f(n) \leq c g(n). \]
When~$f(n) \in O(g(n))$, we also say that~\emph{$f(n)$ is~$O(g(n))$}.
\end{definition}
Intuitively, if a function~$f(n)$ is~$O(g(n))$, it means that~$f(n)$ grows
at most as fast as~$g(n)$, when the values for~$n$ get large enough --
$f(n)$ grows asymptotically at most as fast as~$g(n)$.

\begin{definition}
Let~$f,g : \mathbb{N} \rightarrow \mathbb{N}$ be computable functions.
Then~$f(n) \in \Omega(g(n))$ if~$g(n) \in O(f(n))$.
When~$f(n) \in \Omega(g(n))$, we also say that~\emph{$f(n)$
is~$\Omega(g(n))$}.
\end{definition}
That is, ``big-Omega'' is the inverse of ``big-O.''
Intuitively, if a function~$f(n)$ is~$\Omega(g(n))$, it means that~$f(n)$
grows at least as fast as~$g(n)$, when the values for~$n$ get large enough --
$f(n)$ grows asymptotically at least as fast as~$g(n)$.

\begin{definition}[Definition 3.22 and Lemma 3.23 in \cite{Flum:2006fk}]
Let~$f,g : \mathbb{N} \rightarrow \mathbb{N}$ be computable functions.
Then~$f(n) \in o(g(n))$ 
if there is a computable function~$h$
such that for all~$\ell \geq 1$ and~$n \geq h(\ell)$,
we have:
\[ f(n) \leq \frac{g(n)}{\ell}. \]
Alternatively, the following definition is equivalent.
We have that~$f(n) \in o(g(n))$ if there exists~$n_0 \in \mathbb{N}$
and a computable function~$\iota : \mathbb{N} \rightarrow \mathbb{N}$
that is nondecreasing and unbounded such that for
all~$n \geq n_0$, it holds that~$f(n) \leq \frac{g(n)}{\iota(n)}$.
When~$f(n) \in o(g(n))$, we also say that~\emph{$f(n)$ is~$o(g(n))$}.
\end{definition}
Intuitively, if a function~$f(n)$ is~$o(g(n))$,
it means that~$g(n)$ grows faster than~$f(n)$,
when the values for~$n$ get large enough --
$g(n)$ grows asymptotically faster than~$f(n)$.

\subsubsection{The Exponential Time Hypothesis}\label{ETH}
The Exponential Time Hypothesis (ETH) says that $\textsc{3-SAT}$ (or any of several related NP-complete problems) cannot be solved in subexponential time in the worst case \cite{Impagliazzo:1999uq}. The ETH implies that $\P \neq \NP$. It also provides a way to obtain lower bounds on the running time of algorithms solving certain fundamental computational problems \cite{Lokshtanov:2011fk}.
Formally, the ETH can be stated as follows.

\begin{quote}
\textbf{Exponential Time Hypothesis:}\\
$\textsc{3-SAT}$ cannot be solved in time~$2^{o(n)}$,
where~$n$ denotes the number of variables in the input formula.
\end{quote}
\noindent The following result, that we will use to prove the existence
of intermediate Ramsey quantifiers (assuming the ETH)
is an example of a lower bound based on the ETH.
This result is about the problem $k$-\textsc{clique}, which involves finding cliques
(i.e., subgraphs that are complete graphs) of a certain size in a given graph.
For the problem $k$-\textsc{clique}, the input is a simple graph~$G = (V,E)$ and a
positive integer~$k$. The questions is whether~$G$ contains a clique
of size~$k$.

\begin{theorem}[\cite{Chen:2005fv}]
\label{thm:kclique-eth}
Assuming the ETH, there is no~$f(k)m^{o(k)}$ time
algorithm for $k$-\textsc{clique},
where~$m$ is the input size
and where~$f$ is a computable function.
\end{theorem}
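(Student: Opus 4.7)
The plan is to proceed by contradiction, assuming there is an algorithm for $k$-\textsc{clique} running in time $f(k) \cdot m^{o(k)}$ for some computable $f$, and then to use it to solve 3-SAT in time~$2^{o(n)}$, contradicting the ETH. As a preprocessing step, I would invoke the Sparsification Lemma of Impagliazzo, Paturi and Zane so as to restrict attention to 3-SAT instances with $O(n)$ clauses; without this reduction the instance blow-up in the clique construction below would outstrip what the ETH argument can absorb.

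The central step is a parameterized reduction from sparsified 3-SAT on $n$ variables to $k$-\textsc{clique} on roughly $m' \leq k \cdot 2^{n/k}$ vertices, where $k$ is a parameter to be chosen later as a slowly growing function of $n$. The standard construction partitions the variable set into $k$ blocks of size $n/k$ and introduces one vertex per pair consisting of a block and a locally consistent partial assignment to that block; two such vertices from distinct blocks are joined by an edge exactly when their combined partial assignment does not falsify any clause whose variables lie in the union of those two blocks. With a little additional care to handle clauses that span three blocks (for instance by refining the partition), a satisfying assignment of the formula corresponds precisely to a $k$-clique in the resulting graph.

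The final and most delicate step is the parameter calibration. Plugging the graph size $m' \leq k \cdot 2^{n/k}$ into the hypothesized bound yields a running time of the form $f(k) \cdot k^{o(k)} \cdot 2^{o(n)}$. To convert this into a genuine $2^{o(n)}$ algorithm for 3-SAT, I would rely on the alternative characterization of little-$o$ recalled in the preliminaries: there is a computable, nondecreasing, unbounded function $\iota$ such that the exponent on $m'$ is bounded by $k/\iota(m')$. Letting $k = k(n)$ tend to infinity slowly enough ensures that both $f(k)$ and $k^{o(k)}$ are dominated by $2^{o(n)}$, so the overall running time remains subexponential in $n$ and the ETH is contradicted.

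The principal obstacle is exactly this last asymptotic manipulation: translating a parameterized little-$o$ statement in $k$ into a genuinely subexponential bound in the original problem size $n$ is standard within the ``lower bounds via ETH'' line of work, but requires careful bookkeeping with the witness $\iota$ and a judicious choice of $k(n)$. The reduction itself, together with the invocation of the Sparsification Lemma, I would treat as black boxes rather than redo in detail.
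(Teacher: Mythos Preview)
The paper does not contain a proof of this theorem at all: it is quoted from \cite{Chen:2005fv} and used purely as a black box in the later arguments (Propositions~\ref{prop:fclique-lower-bound1} and~\ref{prop:fclique-lower-bound2}). There is therefore no ``paper's own proof'' to compare your proposal against.

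That said, your sketch is essentially the argument found in the cited source and the surrounding parameterized-complexity literature: sparsify the 3-SAT instance so that the clause count is linear in~$n$, partition into~$k$ blocks and build a $k$-partite graph whose vertices are partial assignments on blocks and whose edges record mutual consistency, observe that satisfying assignments correspond to $k$-cliques on roughly~$k\cdot 2^{n/k}$ vertices, and finally let~$k = k(n)$ drift to infinity slowly enough that both~$f(k)$ and the~$m^{o(k)}$ factor are absorbed into~$2^{o(n)}$. Your identification of the two genuinely delicate points --- the handling of clauses touching three distinct blocks (usually sidestepped by routing through 3-\textsc{Coloring} or by grouping clauses rather than variables), and the calibration of~$k(n)$ via the effective witness~$\iota$ for the little-$o$ bound --- is accurate, and treating the Sparsification Lemma and the block-partition reduction as black boxes is the right level of detail for a sketch.
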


\section{Ramsey Theory and Quantifiers}
\label{RamQua} \label{CompRam} 


Informally speaking the Finite Ramsey Theorem \cite{Ram29} states the following:%
\footnote{More precisely, the Finite Ramsey Theorem states that there exists a function~$R : \mathbb{N} \rightarrow \mathbb{N}$ such that for any coloring of the complete graph with~$R(n)$ vertices that colors the edges with two colors, there is a monochromatic subgraph of size~$n$.}

\

\textbf{The Finite Ramsey Theorem --- general schema} {\it When coloring the edges of a sufficiently large complete finite graph, one will find a homogeneous subset of arbitrary large finite cardinality, i.e., a complete subgraph with all edges of the same color.} 

\

For suitable explications of what ``large set'' means we obtain various Ramsey properties. For example, ``large set'' may mean a ``set of cardinality at least $f(n)$'', where $f$ is a function from natural numbers to natural numbers on a universe with $n$ elements (see e.g. \cite{HVW97}). We will adopt this interpretation and study the computational complexity of the Ramsey quantifiers determined by various functions $f$. Note that in our setting of finite models with one binary relation~$S$, that we will describe below, Ramsey quantifiers are essentially equivalent to the problem of determining whether a graph has a clique of a certain size.

\subsection{Basic Proportional Ramsey Quantifiers}\label{PropRam}
Let us start with a precise definition of ``large relative to the universe''.
\begin{definition}\label{q-large}
 For any rational number $q$ between $0$ and $1$  we say that {\it the set $A \subseteq U$ is $q$-large relative to $U$} if and only if \[\frac{|A|}{|U|} \geq q.\] 
\end{definition}
In this sense $q$ determines the {\it basic proportional Ramsey quantifier ${\sf R_q}$} 
of type $(2)$ (cf. Definition \ref{defGQs}).

\begin{definition}
Let~$\mathbb{M}= (M, S)$ be a relational model with universe~$M$ and a binary relation~$S$.  We say that $\mathbb{M} \in {\sf R_q}$  iff there is a $q$-large (relative to $M$) $A \subseteq M$ such that for all $a, b \in A$, $\mathbb{M} \models S(a, b)$.
\end{definition}


\begin{theorem}[\cite{Szy2011}]
\label{q-clique}
For every rational number $q$, such that $0<q<1$, the corresponding Ramsey quantifier ${\sf R_q}$ is $\NP$-complete. 
\end{theorem}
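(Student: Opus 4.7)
The plan is to establish membership in NP and NP-hardness separately. Membership is immediate: guess a candidate set $A \subseteq M$ and check in polynomial time that $|A| \geq q|M|$ and that $S(a, b)$ holds for every pair $a, b \in A$.

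For NP-hardness I would reduce from the classical CLIQUE problem. Given an instance $(G, k)$ with $G = (V, E)$ and $n := |V|$, the idea is to build a binary model $\mathbb{M} = (M, S)$ whose clique structure mirrors that of $G$ but whose universe is enlarged so that the threshold $q|M|$ lands exactly on the target value after a suitable shift. Writing $q = p/r$ in lowest terms, take $M$ to be the disjoint union of $V$ with two fresh sets $U$ and $W$ of sizes $t$ and $m$ respectively, to be chosen below. Let $S$ consist of the edges of $G$, all pairs $(u, x)$ and $(x, u)$ with $u \in U$ and $x \in V \cup U$, and the self-loops $\{(v, v) : v \in V \cup U\}$; vertices in $W$ receive no $S$-edges at all, not even self-loops. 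Since the definition of ${\sf R_q}$ requires $S(a, a)$ for every $a$ in the witnessing set $A$, no element of $W$ can appear in $A$; hence every candidate witness lies in $V \cup U$, and the largest such set has size $\omega(G) + t$, where $\omega(G)$ denotes the clique number of $G$.

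It then suffices to choose $t$ and $m$ so that $q(n + t + m) = k + t$, for then $\mathbb{M} \in {\sf R_q}$ iff $\omega(G) + t \geq k + t$ iff $G$ contains a $k$-clique. The main --- and only --- technical obstacle is arranging this equation with $m$ a nonnegative integer while keeping the construction polynomial in $n$. Solving for $m$ gives $m = (k + t)/q - (n + t)$, so nonnegativity amounts to $(1 - q)t \geq qn - k$, which, since $q < 1$, holds for all sufficiently large $t = O(n)$; integrality of $m$ reduces to the divisibility condition $p \mid (k + t)$, easily enforced by increasing $t$ by at most $p - 1$. Because $p$ and $q$ depend only on the fixed quantifier ${\sf R_q}$, both $t$ and $m$ remain polynomially bounded in $n$, so the reduction is computable in polynomial time, and correctness follows from the clique-size calculation above.
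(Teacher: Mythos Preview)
The paper does not give its own proof of this theorem; it simply cites the result from \cite{Szy2011}. Your argument is correct and is essentially the standard route: reduce from \textsc{Clique} by padding the instance with a set $U$ of universal vertices to shift the clique number and a set $W$ of isolated vertices to inflate the universe, then tune $t = |U|$ and $m = |W|$ so that the threshold $q|M|$ lands on $k+t$. The arithmetic you give for nonnegativity and integrality of $m$ is right, and since $p,r$ are fixed with the quantifier the blow-up is linear in $n$.

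Your use of self-loops to bar $W$ from any witness set is a clean way to read the clause ``for all $a,b \in A$, $S(a,b)$'' literally (it does include $a=b$). Even under the more permissive reading with $a \neq b$, an element of $W$ could contribute at most a singleton to a witness set, so the reduction would still go through once $t \geq 1$. One small point worth stating explicitly is that ``the edges of $G$'' should be put into $S$ symmetrically, i.e., both $(a,b)$ and $(b,a)$ for each undirected edge, so that the required two-way condition $S(a,b) \wedge S(b,a)$ is met on $V$.
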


\subsection{Tractable Ramsey Quantifiers}\label{TRQ}

We have shown an example of a family of NP-complete Ramsey quantifiers. In this section we will describe a class of Ramsey quantifiers that are computable in polynomial time. 
Let us start with considering an arbitrary function $f: \mathbb{N} \rightarrow \mathbb{N}$.

\begin{definition}
 We say that a {\it set $A\subseteq U$ is $f$-large relatively to $U$} iff \[|A| \geq f(|U|).\]
\end{definition}

Then we define Ramsey quantifiers  of type $(2)$ corresponding to the notion of ``$f$-large''.

\begin{definition}
  We define {\it ${\sf R_f}$} as the class of relational models  $\mathbb{M} = (M,S)$, with universe~$M$ and a binary relation~$S$, such that there is an $f$-large set $A \subseteq M$ such that for each $a, b \in A$, $\mathbb{M} \models S(a, b)$.
\end{definition}

Notice that the above definition is very general and covers all previously defined Ramsey quantifiers. For example, we can reformulate Theorem \ref{q-clique} in the following way:

\begin{corollary}
Let $f(n)= \lceil rn \rceil $, for some rational number $r$ such that $0 < r < 1$.
Then the quantifier ${\sf R_f}$ defines a $\NP$-complete class of finite models. 
\end{corollary}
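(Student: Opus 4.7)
The plan is to reduce the corollary directly to Theorem \ref{q-clique} by verifying that the two notions of ``largeness'' coincide when $f(n)=\lceil rn\rceil$. First I would unpack the two definitions side by side. A set $A\subseteq M$ is $f$-large relative to $M$ precisely when $|A|\geq f(|M|)=\lceil r|M|\rceil$, while $A$ is $q$-large relative to $M$ (with $q=r$) precisely when $|A|/|M|\geq r$, i.e., $|A|\geq r|M|$. Since $|A|$ is a nonnegative integer, the inequality $|A|\geq r|M|$ is equivalent to $|A|\geq \lceil r|M|\rceil$ (both for the case where $r|M|\in\mathbb{N}$ and for the case where it is not). Hence a set $A$ is $f$-large relative to $M$ if and only if it is $r$-large relative to $M$.

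From this equivalence it follows that the two classes of models $\mathsf{R}_f$ and $\mathsf{R}_r$ are literally identical: a model $\mathbb{M}=(M,S)$ contains an $f$-large clique of $S$ iff it contains an $r$-large clique of $S$. Since $0<r<1$ is a rational number, Theorem \ref{q-clique} tells us that $\mathsf{R}_r$ is $\NP$-complete, and therefore so is $\mathsf{R}_f$.

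There is no real obstacle here; the only point that needs a moment of care is the handling of the ceiling function, ensuring that rounding up does not create a mismatch between the strict proportional inequality $|A|/|M|\geq r$ and the integer-threshold inequality $|A|\geq\lceil r|M|\rceil$. Once that micro-observation is made, the corollary is immediate from the stated theorem.
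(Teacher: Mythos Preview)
Your proposal is correct and matches the paper's treatment: the corollary is stated there as an immediate reformulation of Theorem~\ref{q-clique}, with no separate proof given, precisely because $f$-largeness for $f(n)=\lceil rn\rceil$ coincides with $r$-largeness. Your handling of the ceiling is the only thing that needed spelling out, and you did so correctly.
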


Let us put some further restrictions on the class of functions we are interested in. First of all, as we will consider $f$-large subsets of the universe we can assume that for all $n \in \mathbb{N}$, $f(n) \leq n+1$. In that setting the quantifier ${\sf R_f}$ says about a set $A$ that it has at least $f(n)$ elements, where $n$ is the cardinality  of the universe. We allow the function to be equal to $n+1$ just for technical reasons as in this case the corresponding quantifier has to be always false.

Our crucial notion goes back to paper \cite{Van97} of V\"{a}\"{a}n\"{a}nen:

\begin{definition}\label{boundedness} 
We say that a function {\it $f$ is bounded} if there exists a positive integer~$m$ such that for all positive integers~$n$ it holds that:
\[ f(n) < m \qquad\mtext{ or }\qquad n - m < f(n).\]
Otherwise, $f$ is {\it unbounded}.
\end{definition}

%
%
%
%
%
%
%
%
%
%
%
%


\begin{theorem}[\cite{Szy2011}]
\label{PRam}
If  $f$ is polynomial-time computable and bounded, then the Ramsey quantifier ${\sf R_f}$ is polynomial-time computable.  
\end{theorem}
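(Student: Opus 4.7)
The plan is to exploit the constant $m$ given by the boundedness of $f$: for every input size $n$, we are guaranteed that either $f(n) < m$ (so the target clique is tiny) or $n - m < f(n)$ (so the complement of the target clique is tiny). Since $m$ depends only on the quantifier ${\sf R_f}$ and not on the input, a brute-force enumeration of all subsets of $M$ of size at most $m-1$ yields only $O(n^{m-1})$ candidates, which is polynomial in~$n$.

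Concretely, given an input model $\mathbb{M}=(M,S)$ with $n=|M|$, I would first compute $f(n)$ in polynomial time. Next, I would test which disjunct of the boundedness condition holds (at least one always does). In the small case $f(n) < m$, enumerate all $\binom{n}{f(n)}$ subsets $A \subseteq M$ of size exactly $f(n) \leq m-1$ and, for each, verify in $O(m^2)$ time that $S(a,b)$ holds for every pair $a,b\in A$. In the large case $n - m < f(n)$, enumerate all $\binom{n}{n-f(n)}$ subsets $B\subseteq M$ of size exactly $n-f(n) < m$, put $A = M\setminus B$, and check in $O(n^2)$ time whether $A$ is a clique under $S$. Accept iff some candidate $A$ is found to be a clique; otherwise reject. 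Finitely many small values of $n$ (those with $n < 2m$) can be handled by raw brute force, in constant time.

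Correctness rests on the fact that cliques are closed under taking subsets: if any $f$-large clique $A^{\ast}$ exists in $\mathbb{M}$, then in the small case some subset of $A^{\ast}$ of size exactly $f(n)$ is also a clique, and in the large case there is some $B$ of size exactly $n-f(n)$ whose complement $A = M \setminus B$ is contained in $A^{\ast}$ and has size $f(n)$. Since the number of enumerated candidates is $O(n^{m-1})$ and checking each takes at most $O(n^2)$ time, the overall running time is $O(n^{m+1})$, polynomial in the input size. The only conceptual step, a routine one, is the duality in the large case that turns ``finding a clique of size at least $n-m+1$'' into ``finding a hitting set of size at most $m-1$ for the non-$S$-pairs''; I do not anticipate any genuine obstacle beyond this observation.
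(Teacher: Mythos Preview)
Your proposal is correct and follows essentially the same approach as the paper's own (sketch) proof: use the boundedness constant~$m$ to restrict attention to subsets of~$M$ of size either below~$m$ or above~$n-m$, of which there are only polynomially many, and brute-force check each. Your write-up merely spells out the two cases (enumerating small sets directly, or enumerating small complements in the large case) more explicitly than the paper's sketch does.
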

\begin{proof}[sketch]
Let~$m$ be the integer such that for all~$n$ it holds
that either~$f(n) < m$ or~$n-m < f(n)$.
This means that for every model~$\mathbb{M} = (M,S)$
with~$|M| = n$, to decide if~$\mathbb{M} \in {\sf R_f}$,
we only need to consider those subsets~$A \subseteq M$
for which holds~$|A| < m$ or~$|A| > n-m$.
Since~$m$ is a constant, these are only polynomially many.
\end{proof}

\subsection{Intermediate Ramsey Quantifiers}
We have shown that proportional Ramsey quantifiers define NP-complete classes of finite models. On the other hand, we also observed that bounded Ramsey quantifiers are polynomial-time computable.

The first question we might ask is whether for all functions $f$ the Ramsey quantifier ${\sf R_f}$  is either polynomial-time computable or $\NP$-complete. We observe that this cannot be the case if we make some standard complexity-theoretic assumptions. 

\begin{theorem}
Let~$f(n) = \lceil \log n \rceil$. The quantifier ${\sf R_f}$ is neither polynomial-time computable nor $\NP$-complete, unless the ETH fails.
\end{theorem}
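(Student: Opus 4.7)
\emph{Proof plan.} The plan is to exploit a brute-force upper bound and a padding-based reduction. First, observe that ${\sf R_f}$ is in $\NP$ and can be decided in time $n^{O(\log n)} = 2^{O(\log^2 n)}$ by enumerating all $\binom{n}{\lceil \log n \rceil}$ subsets of the universe and testing each for being a clique in the binary relation.

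For the ``not $\NP$-complete'' half, I argue by contradiction. An $\NP$-completeness reduction from $\textsc{3-SAT}$ on $n$ variables would yield an ${\sf R_f}$-instance of size $N = n^{O(1)}$, and then the brute-force algorithm above decides $\textsc{3-SAT}$ in time $2^{O(\log^2 n)}$. Since $\log^2 n \in o(n)$, this is a subexponential algorithm for $\textsc{3-SAT}$ and contradicts the ETH.

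For the ``not in $\P$'' half, I reduce $k$-\textsc{clique} to ${\sf R_f}$ by padding, with the aim of deriving an $f(k) m^{o(k)}$-time algorithm for $k$-\textsc{clique} and contradicting Theorem~\ref{thm:kclique-eth}. Given an instance $(G, k)$ with $|V(G)| = v$, two regimes are needed. If $k \geq \lceil \log v \rceil$, add $2^k - v$ isolated vertices to $G$ so that $G'$ has $N = 2^k$ vertices, whence $\lceil \log N \rceil = k$ and $G'$ has a $k$-clique iff $G$ does. If $k < \lceil \log v \rceil$, add $c$ \emph{universal} vertices (each adjacent to every other vertex of the padded graph) to $G$, choosing $c$ to be the least non-negative integer with $\lceil \log(v+c) \rceil = k+c$; the clique threshold then goes up by $c$ together with the universe, so $G'$ has a $\lceil \log N \rceil$-clique iff $G$ has a $k$-clique. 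In either case, applying a hypothetical polynomial-time algorithm for ${\sf R_f}$ to $G'$ takes time $2^{O(k)}$ in the first regime and $v^{O(1)} = m^{O(1)}$ in the second, where $m$ is the input size; together this is bounded by $f(k) \cdot m^{c'}$ for some constant $c'$ and a computable function $f$. Since every constant function lies in $o(k)$, this has the form $f(k) m^{o(k)}$, contradicting Theorem~\ref{thm:kclique-eth}.

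The main delicate step is verifying that the universal-vertex padding in the case $k < \lceil \log v \rceil$ admits a valid choice of $c$. This follows because $c \mapsto \lceil \log(v+c) \rceil - (k+c)$ is integer-valued, weakly decreasing with steps in $\{-1, 0\}$, starts positive at $c = 0$ under the case assumption, and eventually turns negative, so it must pass through $0$; moreover this $c$ is $O(\log v)$, which keeps the reduction (and hence the resulting $k$-\textsc{clique} algorithm in this regime) polynomial in $m$.
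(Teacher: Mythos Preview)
Your argument is correct. The first half (non-$\NP$-completeness via the $n^{O(\log n)}$ brute-force upper bound composed with a hypothetical many-one reduction from \textsc{3-SAT}) is exactly what the paper does.

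For the second half (non-membership in $\P$) the paper takes a different route: it simply quotes an external result of Cai et al.\ stating that a polynomial-time algorithm for $\lceil\log n\rceil$-clique would refute the ETH, and stops there. You instead give a self-contained padding reduction from $k$-\textsc{clique} to~${\sf R_f}$ and invoke Theorem~\ref{thm:kclique-eth}. This is a genuine alternative, and in fact it anticipates the technique the paper deploys later in Proposition~\ref{prop:fclique-lower-bound1} for arbitrary sublinear~$f$: blow up the universe so that the threshold $f(N)$ matches the desired clique size. Your two-regime split (isolated-vertex padding when $k\ge\lceil\log v\rceil$, universal-vertex padding when $k<\lceil\log v\rceil$) and the discrete intermediate-value argument for the existence of a suitable~$c$ are both sound, and your bound $c=O(\log v)$ is correct, keeping the second regime polynomial in~$m$. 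What you gain over the paper's proof is self-containment (no black-box citation); what the paper's version gains is brevity. One cosmetic point: you reuse the letter~$f$ for the computable function in the running-time bound $f(k)m^{c'}$, which clashes with $f(n)=\lceil\log n\rceil$; pick a fresh symbol such as~$g$.
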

\begin{proof}
Firstly assume that ${\sf R_f}$ is NP-complete.
This means that there is a polynomial-time reduction~$R$
from~$\textsc{3-SAT}$ to  ${\sf R_f}$
(that takes as input an instance of~$\textsc{3-SAT}$ with~$n$ variables
and produces an equivalent instance of ${\sf R_f}$ with~$n' = n^d$ elements,
for some constant~$d$).
There is a straightforward brute force search algorithm~$A$
that solves ${\sf R_f}$ in time~$O((n')^{f(n')}) = O((n')^{\lceil \log n' \rceil})$.
Composing~$R$ and~$A$ then leads to an algorithm
that solves~$\textsc{3-SAT}$ in
time~$O((n^d)^{\lceil\log n^d\rceil}) = O(n^{d^2 \log n})
= O(2^{d^2 (\log n)^2})$, for some constant~$d$,
which runs in subexponential time.
Therefore, the ETH fails.

On the other hand, it is known that if the problem of deciding whether
a given graph with $n$ vertices has a clique of size~$\geq \log n$
(equivalently ${\sf R_f}$, for~$f(n) = \lceil \log n \rceil$)
is solvable in polynomial time,
then the ETH fails \cite[Theorem~3.4]{Cai:2002zl}.
\end{proof}

In other words, assuming the ETH, there exist Ramsey quantifiers whose model checking problem is an example of an $\NP$-intermediate problem in computational complexity, i.e., it is a problem that is in $\NP$ but is neither polynomial-time computable nor $\NP$-complete~\cite{Ladner:1975ty}.

The remaining open question is whether there exists a natural (and broad) class of functions for which we can identify exactly which Ramsey quantifiers (based on functions in this class) are polynomial-time computable,
assuming a reasonable complexity assumption (such as the ETH).
In other words:

\begin{problem}
Can we distinguish a natural and broad class~$\FFF$ of functions for which we can identify a dichotomy theorem (for a reasonable complexity-theoretic assumption~$A$) of the following form: assuming~$A$, we can effectively characterize for each function~$f \in \FFF$ whether the Ramsey quantifier corresponding to~$f$ is polynomial-time solvable or not?%
\footnote{Any sensible dichotomy theorem needs to provide a way of
determining for functions~$f \in \FFF$ whether the Ramsey quantifier~${\sf R_f}$
corresponding to~$f$ is polynomial-time solvable or not,
based on properties of~$f$ (other than the property of
whether~${\sf R_f}$ is polynomial-time computable or not).
It follows trivially (from the law of the excluded middle)
that each problem is either polynomial-time solvable
or not polynomial-time solvable.}
\end{problem}

This generalizes traditional dichotomy theorems
(such as Schaefer's Theorem~\cite{Schaefer:1978rz}) in the following way.
These theorems state that for a set~$\QQQ \subseteq \NP$ of decision problems,
each problem~$Q \in \QQQ$ is either polynomial-time solvable
or it is \NP{}-complete, and it provides a way of determining for any
particular problem~$Q \in \QQQ$ which is the case.
The complexity-theoretic assumption underlying the dichotomy
of the problems in~$\QQQ$ in this case is the assumption that~$\P \neq \NP$.
Since we allow arbitrary complexity-theoretic assumptions~$A$,
our schema is a generalization.

In the remainder of the paper, we provide such a dichotomy result
for the Ramsey quantifiers based on a broad class of polynomial-time and nondecreasing functions~$\FFF$,
where the underlying complexity-theoretic assumption is the ETH.
The technical results underlying our dichotomy originate in the literature
on the (parameterized) complexity of the $k$-\textsc{clique} problem
\cite{Chen:2005fv,ChenHuangKanjXia06}.

\section{More Tractable Ramsey Quantifiers}
In order to show our dichotomy result,
we begin by extending the class of tractable Ramsey quantifiers.
We show that there are functions~$f$ that are not bounded,
but for which~${\sf R_f}$ is polynomial-time computable.
Consider the function~$f(n) = n - c \lceil \log n \rceil$,
where~$c$ is some fixed constant.
Clearly, this function~$f$ is not bounded
(in the sense of Definition~\ref{boundedness}).
We show that for functions of this kind,
the Ramsey quantifier~${\sf R_f}$ is polynomial-time computable.

\begin{proposition}
\label{prop:n-min-log-n}
Let~$c \in \mathbb{N}$ be a constant, and
let~$f : \mathbb{N} \rightarrow \mathbb{N}$ be any polynomial-time computable
function such that, for sufficiently large~$n$, $f(n) \geq n - c \log n$.
Then~${\sf R_f}$ is polynomial-time computable.
\end{proposition}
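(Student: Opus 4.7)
The plan is to reduce the problem to a small-parameter vertex cover computation on the complement graph. Given an input model~$\mathbb{M} = (M,S)$ with~$|M| = n$, we first interpret~$S$ as a graph~$G$ on~$M$ (if $S$ is not symmetric, we take the symmetric, irreflexive part, since we only care about pairs $\{a,b\}$ with $a \neq b$ for which both $S(a,b)$ and $S(b,a)$ hold). Membership of~$\mathbb{M}$ in~${\sf R}_f$ is equivalent to asking whether $G$ contains a clique of size at least~$f(n)$.

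The key observation is the standard duality between cliques and vertex covers in the complement. A set~$A \subseteq M$ is a clique in~$G$ if and only if~$M \setminus A$ is a vertex cover of the complement graph~$\bar{G}$. Therefore~$G$ has a clique of size~$\geq f(n)$ if and only if~$\bar{G}$ has a vertex cover of size~$\leq n - f(n)$. By the assumption on~$f$, for sufficiently large~$n$ we have~$n - f(n) \leq c \log n$, so we only need to decide whether~$\bar{G}$ has a vertex cover of size at most~$k := c \lceil \log n \rceil$.

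For this, we invoke the textbook branching algorithm for \textsc{Vertex Cover}: pick any edge~$\{u,v\}$ of~$\bar{G}$; since at least one of~$u,v$ must belong to any vertex cover, recursively try both choices with the parameter decreased by~$1$. The recursion tree has depth at most~$k$ and branching factor~$2$, and each node performs polynomial work, so the total running time is~$O(2^k \cdot \mathrm{poly}(n))$. With~$k = c \lceil \log n \rceil$ this is~$O(2^{c \lceil \log n \rceil} \cdot \mathrm{poly}(n)) = O(n^c \cdot \mathrm{poly}(n))$, which is polynomial in~$n$. Since~$f$ itself is polynomial-time computable, the threshold~$f(n)$ (and thus the bound~$k$) can be computed in polynomial time, and the few small values of~$n$ where the assumption~$f(n) \geq n - c\log n$ may not yet hold can be handled by brute force as a finite table.

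I do not expect a real obstacle: the duality between clique and vertex cover is elementary, and the correctness and running-time analysis of the branching algorithm for \textsc{Vertex Cover} is standard. The only mild subtlety is to make sure that the FPT-style running time~$O(2^k \cdot \mathrm{poly}(n))$ genuinely remains polynomial when~$k$ is allowed to grow as~$c \log n$ rather than being fixed; this is immediate from~$2^{c \log n} = n^c$, but it is the crux of the argument and deserves to be stated explicitly.
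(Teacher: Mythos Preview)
Your proposal is correct and follows essentially the same approach as the paper. The paper simply cites the fact that deciding whether a graph on~$n$ vertices has a clique of size~$n-k$ can be done in time~$2^{k}\cdot\mtext{poly}(n)$ \cite[Proposition~4.4]{Flum:2006fk} and plugs in~$k=n-f(n)\le c\log n$, whereas you unfold that citation into the clique/vertex-cover duality plus the standard branching algorithm; the underlying algorithm and the running-time analysis are identical.
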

\begin{proof}
Firstly, we consider the problem of, given a simple graph~$G = (V,E)$
with~$n$ vertices, and an integer~$k$, deciding whether~$G$ contains
a clique of size at least~$n-k$.
We know that this problem can be solved in time~$2^k \cdot \mtext{poly}(n)$
\cite[Proposition~4.4]{Flum:2006fk}.
In other words, deciding whether a graph with~$n$ vertices contains a
clique of size~$\ell$ can be done in time~$2^{n-\ell} \cdot \mtext{poly}(n)$.
We will use this result to show polynomial-time computability of~${\sf R_f}$.

Let~$\mathbb{M}$ be a structure with a universe~$M$
containing~$n$ elements,
and let~${\sf R_f} xy\ \varphi(x,y)$ be an ${\sf R_f}$-quantified
formula.
We construct the graph~$G = (V,E)$ as follows.
We let~$V = M$, and for each~$a,b \in M$ we
let~$E$ contain an edge between~$a$ and~$b$ if and only
if~$\mathbb{M} \models \varphi(a,b)$.
Clearly,~$G$ can be constructed in polynomial time.

Moreover,~$G$ has a clique of size~$f(n)$ if and only
if~$\mathbb{M} \models {\sf R_f} xy\ \varphi(x,y)$.
Therefore, it suffices to decide whether~$G$ has a clique of size~$f(n)$.
We know that~$f(n) \geq n - c \log n$.
As mentioned above, we know we can decide this in
time~$2^{n-f(n)} \cdot \mtext{poly}(n)$.
Because~$n-f(n) \leq c \log n$, we get that~$2^{n-f(n)} \leq 2^{c \log n} = n^c$.
Thus, we can solve the problem in polynomial time.
\end{proof}

The above result can be nicely phrased using a notion of boundedness
that differs from the one in Definition~\ref{boundedness}.

\begin{definition}[Constant-log-boundedness]
Let~$f : \mathbb{N} \rightarrow \mathbb{N}$ be a computable function.
We say that~$f$ is \emph{constant-log-bounded} if
there exists a constant~$c \in \mathbb{N}$ such that for
all sufficiently large~$n \in \mathbb{N}$
(i.e., all~$n \geq n_0$ for some fixed~$n_0 \in \mathbb{N}$),
one of the following holds:
\begin{itemize}
  \item $f(n)$ is bounded above by the constant~$c$,
    i.e., it holds that~$f(n) \leq c$; or
  \item $f(n)$~differs from~$n$ by at most~$c \log n$,
    i.e., it holds that~$f(n) \geq n - c \log n$.
\end{itemize}
\end{definition}

\begin{corollary}
\label{cor:constant-log-bounded}
Let~$f : \mathbb{N} \rightarrow \mathbb{N}$ be a constant-log-bounded
function.
Then~${\sf R_f}$ is polynomial-time computable.
\end{corollary}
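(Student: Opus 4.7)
The plan is a straightforward case split that reduces everything to the two tractability results already in hand: Theorem~\ref{PRam} on the small side and Proposition~\ref{prop:n-min-log-n} on the large side. Given an input structure~$\mathbb{M} = (M,S)$ with~$|M| = n$, I would first compute~$f(n)$ in polynomial time (assuming, as is implicit in the setting and as is explicitly required by Proposition~\ref{prop:n-min-log-n}, that $f$ is polynomial-time computable). For the finitely many~$n$ below the threshold~$n_0$ from the definition of constant-log-boundedness, the problem is decided in constant time by brute force.

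For~$n \geq n_0$, the definition guarantees that at least one of the two conditions holds: either~$f(n) \leq c$ or~$f(n) \geq n - c \log n$. In the first case, deciding~$\mathbb{M} \in {\sf R_f}$ reduces to enumerating all subsets~$A \subseteq M$ of size~$f(n) \leq c$ and checking whether any of them is an~$S$-clique; since~$c$ is a constant, there are at most~$\binom{n}{c} = O(n^c)$ such subsets, and each check takes polynomial time. In the second case, I would invoke the algorithm behind Proposition~\ref{prop:n-min-log-n}, namely the~$2^{n-f(n)} \cdot \mathrm{poly}(n)$ procedure for $(n-k)$-\textsc{clique} with~$k = n - f(n)$; by assumption~$n - f(n) \leq c \log n$, so this runs in~$n^{c} \cdot \mathrm{poly}(n)$ time.

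The overall algorithm then computes~$f(n)$, tests the two conditions, and dispatches to whichever subroutine is applicable (if both are, either choice works). The main, and fairly mild, obstacle is that the definition permits the active case to alternate with~$n$, so the algorithm cannot commit globally to either the small-clique or the large-clique branch; the fix is simply to let the choice of subroutine depend on which condition the computed value~$f(n)$ currently satisfies. Since this test itself is polynomial, correctness and the polynomial time bound follow immediately from the two results being combined.
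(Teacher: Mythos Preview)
Your proposal is correct and follows essentially the same approach as the paper: the paper's proof simply says that the algorithms from Theorem~\ref{PRam} and Proposition~\ref{prop:n-min-log-n} ``can straightforwardly be combined,'' noting in particular that this also covers functions for which neither bound holds uniformly in~$n$. You have just spelled out this combination explicitly (including the per-input dispatch on which inequality the computed~$f(n)$ satisfies), and your observation that polynomial-time computability of~$f$ is an implicit hypothesis is accurate.
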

\begin{proof}
The algorithms in the proofs of Theorem~\ref{PRam}
and Proposition~\ref{prop:n-min-log-n} can straightforwardly
be combined to work for all constant-log-bounded functions~$f$
-- that is, also for those functions that are appropriately lower bounded
or upper bounded for all~$n \in \mathbb{N}$, but neither
appropriately lower bounded for all~$n$ nor
appropriately upper bounded for all~$n$.
\end{proof}

\section{Intractable Ramsey Quantifiers}

In this section, we show for a large natural class of natural functions~$f$
that~${\sf R_f}$ is not polynomial-time computable, unless the ETH fails.

\subsection{Restrictions on the Class of Functions}

One way in which we assume the functions~$f$ to be natural
is that the value~$f(n)$ is computable in time polynomial in~$n$.%
\footnote{In other words, technically,
we assume that~$f(n)$ is polynomial-time
computable when the value~$n$ is given in unary}
From now on, we will assume that this property holds
for all functions~$f$ that we consider. This assumption corresponds
to restricting the attention to polynomial-time computable monadic generalized
quantifiers which seems reasonable from a natural language
perspective \cite{Szy2011}. 

In fact, for any function~$f$ that is not polynomial-time computable,
the problem~${\sf R_f}$ cannot be computable in polynomial time
either.
\begin{proposition}
\label{prop:ptime}
Let~$f : \mathbb{N} \rightarrow \mathbb{N}$ be a function that is
not polynomial-time computable.
Then~${\sf R_f}$ is not polynomial-time computable.
\end{proposition}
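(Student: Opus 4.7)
The plan is to argue by contraposition: I would assume that~${\sf R_f}$ is polynomial-time computable and show that this forces~$f$ itself to be polynomial-time computable when~$n$ is given in unary, contradicting the hypothesis. The core idea is to use the assumed polynomial-time decider for~${\sf R_f}$ as an oracle from which one can extract the exact value~$f(n)$ with only polynomially many calls.

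Concretely, for each pair~$(n,k)$ with~$k \in \{0, 1, \ldots, n\}$ I would construct a model~$\mathbb{M}_{n,k} = (M, S)$ as follows: take~$|M| = n$, fix some~$C \subseteq M$ with~$|C| = k$, and set~$S = C \times C$ (explicitly including the loops on~$C$). Any subset~$A \subseteq M$ satisfying~$S(a,b)$ for all~$a,b \in A$ must sit inside~$C$, so the largest such~$A$ has size exactly~$k$; in particular~$\mathbb{M}_{n,k} \in {\sf R_f}$ if and only if~$f(n) \leq k$. Feeding~$\mathbb{M}_{n,0}, \mathbb{M}_{n,1}, \ldots, \mathbb{M}_{n,n}$ into the assumed decider (linearly, or via binary search) therefore pinpoints~$f(n)$ as the smallest~$k$ for which the decider answers ``yes''; if no such~$k \leq n$ exists, the convention~$f(n) \leq n+1$ from earlier in the paper lets us output~$n+1$. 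Each~$\mathbb{M}_{n,k}$ has size polynomial in~$n$, and there are at most~$n+1$ queries, so the whole procedure runs in polynomial time in~$n$, producing~$f(n)$ and contradicting the assumption that~$f$ is not polynomial-time computable on unary inputs.

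I do not expect any genuine obstacle here—this is essentially a direct oracle simulation. The one detail to be careful about is the treatment of reflexive pairs: the Ramsey clique condition requires~$S(a,b)$ for every~$a,b \in A$ including~$a = b$, which is why the construction puts all of~$C \times C$ into~$S$ rather than only the off-diagonal pairs, ensuring that the maximum $S$-clique in~$\mathbb{M}_{n,k}$ is in fact~$k$ and not something smaller through a degenerate corner case. Everything else is routine bookkeeping.
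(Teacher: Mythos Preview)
Your proposal is correct and follows essentially the same approach as the paper: build a family of size-$n$ models whose maximum $S$-clique has each possible size~$k$, query the assumed polynomial-time decider for~${\sf R_f}$ on each, and read off~$f(n)$ as the threshold value of~$k$. Your treatment is in fact slightly more careful than the paper's, explicitly handling the reflexive pairs in~$S$ and the boundary case~$f(n) = n+1$.
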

\begin{proof}
Let~$f$ be a function that is not polynomial-time computable
and suppose (to derive a contradiction) that~${\sf R_f}$
is polynomial-time computable.
We give a polynomial-time algorithm to compute~$f$.

Let~$n$ be an arbitrary positive integer.
Consider the family~$(G_i)_{i=1}^{n}$ of graphs,
where for each~$1 \leq i \leq n$, the graph~$G_i = (V,E_{i})$
has a vertex set~$V = \SBs 1,\dotsc,n \SEs$ of size~$n$,
and where:
\[ E_{i} = \SB \SBs j_1,j_2 \SEs \SM 1 \leq j_1 < j_2 \leq i \SE. \]
That is, each graph~$G_i$ has~$n$ vertices, and the edges form
a clique on the first~$i$ vertices (and there are no other edges).

Now, by definition of~${\sf R_f}$, it follows that~$G_i \in {\sf R_f}$
if and only if~$i \geq f(n)$.
Therefore, to compute the value~$f(n)$,
one only needs to identify the maximum value~$i$
such that~$G_i \in {\sf R_f}$.
Since there are only~$n$ graphs~$G_i$,
and deciding whether~$G_i \in {\sf R_f}$ can be done in polynomial time,
the value~$f(n)$ can be computed in polynomial time.

This is a contradiction with our assumption that~$f$ is not polynomial-time
computable. Therefore, we can conclude that~${\sf R_f}$
is not polynomial-time computable.
\end{proof}

Considering this result, in the remainder of the paper we will only
consider functions~$f$ that are polynomial-time computable.

\begin{assumption}
The functions~$f$ that we consider for our dichotomy result
are polynomial-time computable,
i.e., the value~$f(n)$ is computable in time polynomial in~$n$.
\end{assumption}

Moreover, to smoothen the technical details,
we will also direct our attention to nondecreasing functions.

\begin{assumption}
The functions~$f$ that we consider for our dichotomy result
are nondecreasing,
i.e., for all~$n \in \mathbb{N}$ it holds that~$f(n) \leq f(n+1)$.
\end{assumption}


\subsection{Intractability Based on the ETH}

In this section,
we set out to prove the technical results
that will give us the dichotomy result for~${\sf R_f}$,
for the class of polynomial-time computable functions~$f$.
We start with considering the following class of
sublinear functions, that we will use in order to
prove the dichotomy result.

\begin{definition}[Sublinear functions]
Let~$f : \mathbb{N} \rightarrow \mathbb{N}$
be a nondecreasing function.
We say that~$f$ is \emph{sublinear}
if~$f(n)$ is~$o(n)$,
i.e., if there exists some computable function~$s(n)$
that is nondecreasing and unbounded,
and some~$n_0 \in \mathbb{N}$,
such that for all~$n \in \mathbb{N}$ with~$n \geq n_0$
it holds that~$f(n) \leq \frac{n}{s(n)}$.
\end{definition}

In order to illustrate this concept,
we give a few examples of
sublinear functions.

\begin{example}
Consider the function~$f_1(n) = \lceil \log n \rceil$.
This function is sublinear,
which is witnessed by~$s(n) = n/\lceil \log n \rceil$.
Additionally, any function~$f(n)$ that satisfies
that~$f(n) \leq \lceil \log n \rceil$, for all~$n \in \mathbb{N}$,
is also sublinear.
Next, the function~$f_2(n) = \lceil \sqrt{n} \rceil$ is also sublinear,
which is witnessed by~$s(n) = \sqrt{n}/2$.
As a final example, consider the function~$f_3(n) = \lceil n / \log n \rceil$.
Clearly, by taking~$s(n) = \log n / 2$,
we get that~$f_3(n) \leq n/s(n)$.
Therefore,~$f_3$ is also sublinear.
\end{example}

Moreover, we consider the class of functions whose growth
is lower bounded by a linear function.

\begin{definition}[Linearly lower bounded functions]
Let~$f : \mathbb{N} \rightarrow \mathbb{N}$
be a nondecreasing function.
We say that \emph{the growth of~$f$ is
(asymptotically) lower bounded by a linear function}
if~$f(n)$ is~$\Omega(n)$,
i.e., if there exists some constant~$c$
and some~$n_0 \in \mathbb{N}$ such that for
all~$n \geq n_0$ it holds that~$f(n) \geq cn$.
\end{definition}

In order to illustrate this concept,
we give an example of a linearly lower bounded
function.

\begin{example}
Consider the function~$f_4(n) = n - (\log n)^2$.
Because~$(\log n)^2 \leq \frac{n}{2}$ for all~$n \geq 80$,
we get that~$f_4(n) \geq \frac{n}{2}$ for all~$n \geq 80$.
Therefore, the growth of~$f_4$ is asymptotically bounded
by a linear function.
\end{example}


Now, we show that any nondecreasing, computable
function~$f : \mathbb{N} \rightarrow \mathbb{N}$
fits one of four cases.
This will allow us to consider these cases separately,
in order to prove our dichotomy result.
The first case covers all bounded functions;
the second case covers all unbounded but sublinear functions;
and the third and fourth case cover all functions that are
(asymptotically) lower bounded by a linear function.
The difference between the third and fourth case is that the
fourth case covers all functions that differ from~$n$
by at most a factor~$c \log n$, whereas the third case covers
functions whose difference with~$n$ is higher.

\begin{lemma}
\label{lem:case-distinction}
Let~$f : \mathbb{N} \rightarrow \mathbb{N}$ be a nondecreasing,
computable function.
Then one of the following is the case:
\begin{enumerate}
  \item $f(n)$ is~$O(1)$, i.e., there is some constant~$c$ such
    that for all~$n \in \mathbb{N}$,~$f(n) \leq c$;
  \item $f(n)$ is unbounded and~$f(n)$ is~$o(n)$;
  \item $f(n)$ is~$\Omega(n)$ and for sufficiently large~$n$ it holds
    that~$f(n) \leq n - s(n) \log n$, for some unbounded, nondecreasing,
    computable function~$s$; or
  \item for sufficiently large~$n$,~$f(n) \geq n - c \log n$, for some
    constant~$c$.
\end{enumerate}
\end{lemma}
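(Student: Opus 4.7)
The plan is to proceed by a sequential case analysis on the asymptotic behaviour of $f$, filtering $f$ into the first applicable case out of (1)--(4). First, I test whether $f$ is bounded by a constant: if so, case~(1) holds. Otherwise $f$ is unbounded, and I next test whether $f(n)$ is $o(n)$ in the effective sense of the paper; if so, case~(2) holds. If $f$ is unbounded and not $o(n)$, I argue that $f$ must be $\Omega(n)$. The intended argument exploits nondecreasingness and computability of $f$: from the negation of the effective $o(n)$ condition one obtains that $f(n)/n$ fails to drop below $1/\ell$ at suitable points for some fixed $\ell$, and monotonicity of $f$ is used to propagate this into a uniform linear lower bound $f(n) \geq n/c$ for all sufficiently large $n$.

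Having established $f \in \Omega(n)$, I let $g(n) := n - f(n)$ and test whether $g(n) \leq c\log n$ for some constant $c$ and all sufficiently large $n$. If yes, case~(4) holds. Otherwise $g(n)/\log n$ is not ultimately bounded by any constant, and I aim for case~(3). The witness $s$ is then constructed as a nondecreasing computable minorant of $\lfloor g(n)/\log n\rfloor$: let $N_k$ be the first index past which $g(m)/\log m \geq k$ holds for every $m \geq N_k$, and set $s(n) := k$ on the interval $[N_k, N_{k+1})$. Computability of $f$, and hence of $g$, makes each $N_k$ locatable by search, so $s$ is computable, nondecreasing by construction, and unbounded provided the $N_k$ exist.

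The main obstacle is this last point. Unboundedness of $g(n)/\log n$ by itself only provides arbitrarily large individual values, not a uniform tail bound from which $g(m)/\log m \geq k$ holds for \emph{all} subsequent $m$, so the thresholds $N_k$ need not exist a priori. Closing this gap requires exploiting both the nondecreasingness of $f$ and the fact that $f$ is $\Omega(n)$ to rule out that $g$ repeatedly becomes large and then collapses back below $k \log n$: using $g(n) - g(n_0) = (n - n_0) - (f(n)-f(n_0))$ together with monotonicity and the linear lower bound on $f$ to control how fast $g$ can decrease. If a direct argument proves too fragile, a fallback is to weaken the construction by first extracting a subsequence $(n_k)$ on which $g(n_k)/\log n_k \geq k$ and then replacing $s$ by its largest nondecreasing minorant on the intervals $[n_k, n_{k+1})$; making this consistent with the required pointwise inequality $f(n) \leq n - s(n)\log n$ is the technical heart of the proof.
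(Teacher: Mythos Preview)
Your overall structure mirrors the paper's: both arguments amount to ruling out the two residual possibilities ``unbounded, not $o(n)$, not $\Omega(n)$'' and ``$\Omega(n)$ but neither~(3) nor~(4)''. However, the step where you pass from ``not $o(n)$'' to ``$\Omega(n)$'' via monotonicity does not go through. The negation of the effective $o(n)$ condition only gives you some fixed~$\ell$ with $f(n_k) > n_k/\ell$ for infinitely many~$n_k$; monotonicity then yields $f(n) \geq f(n_k) > n_k/\ell$ for $n \geq n_k$, which is a linear lower bound in~$n$ only on an interval of the form $[n_k,\,c\,n_k]$. If the~$n_k$ grow faster than geometrically, the bound collapses in between. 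Concretely, set $a_0 = 2$, $a_{k+1} = 2^{a_k}$, and define $f(n) = a_k$ for $n \in [a_k, a_{k+1})$. This~$f$ is nondecreasing and computable; it satisfies $f(a_k) = a_k$, so it is not $o(n)$; yet $f(a_{k+1}-1)/(a_{k+1}-1) \to 0$, so it is not $\Omega(n)$ either, and it plainly fails case~(4). It lies in none of the four cases, so the implication you are trying to prove is false as stated.

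Your worry about case~(3) is equally well founded, and the obstruction is of the same flavour. Take $f(n) = 2^{\lfloor \log_2 n \rfloor}$: this is nondecreasing, computable, satisfies $n/2 \leq f(n) \leq n$ (hence $\Omega(n)$ and not case~(4)), but $g(n) = n - f(n)$ vanishes at every power of~$2$, so no unbounded~$s$ with $g(n) \geq s(n)\log n$ for all large~$n$ can exist; thus case~(3) fails too. The paper's own proof shares both defects: in the first part it takes the running \emph{maximum} $s''$ of $s'(n) = n/f(n)$ and obtains $n/s''(n) \leq f(n)$, whereas contradicting ``not $o(n)$'' would require the reverse inequality $f(n) \leq n/s(n)$, i.e.\ a nondecreasing unbounded \emph{minorant} of~$s'$ --- and the tower example shows none exists. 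The lemma as stated is therefore not correct; salvaging the dichotomy would require either weakening cases~(2) and~(3) (for instance to conditions along an infinite computable subsequence, and then verifying that the reductions in the subsequent propositions still go through) or adding a further regularity hypothesis on~$f$ beyond monotonicity.
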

\begin{proof}
There are only two possibilities that have to be ruled out to show that
these four cases are exhaustive.
In the first such possibility, the function~$f$ is (i)~not~$O(1)$, i.e.,%
~unbounded, (ii)~not~$o(n)$,
and (iii)~not~$\Omega(n)$.
We show that these assumptions lead to a contradiction.
This can only be the case if for each nondecreasing,
unbounded, computable function~$s$ it holds that there
is no~$n_s \in \mathbb{N}$ such that for all~$n \geq n_s$
it holds that~$f(n) \leq \frac{n}{s(n)}$.
Then, consider the function~$s'(n) = \frac{n}{f(n)}$.
Since~$f$ is computable, so is~$s'$.
Moreover, since for sufficiently large values of~$n$,%
~$f(n) < n$, we know that~$s'$ is unbounded.
We then construct the function~$s''$ by letting:
\[ s''(n) = \begin{dcases*}
  s'(n) & if~$n = 0$; \\
  s'(n) & if~$n > 0$ and~$s'(n) \geq s''(n-1)$; \\
  s''(n-1) & if~$n > 0$ and~$s'(n) < s''(n-1)$. \\
\end{dcases*} \]
We have that for all~$n \in \mathbb{N}$ it holds that~$s''(n) \geq s'(n)$,
and~$s''$ is unbounded, nondecreasing, and computable.
Now, finally, consider the function~$f'(n) = \frac{n}{s''(n)}$.
Since for all~$n \in \mathbb{N}$ it holds that~$s''(n) \geq s'(n)$,
we know that for all~$n$ it also holds that~$f'(n) \leq f(n)$.
Therefore, it must hold that for each nondecreasing, unbounded,
computable function~$s$ there is no~$n_s \in \mathbb{N}$
such that for all~$n \geq n_s$ it holds that~$f'(n) \leq \frac{n}{s(n)}$.
However, we know that for all~$n \geq 0$ it holds
that~$f'(n) = \frac{n}{s''(n)}$, which is a contradiction.
Therefore, we can rule out this first possibility.

In the second possibility that we have to rule out, (i)~the function~$f$
is~$\Omega(n)$, (ii)~it does not hold that, for sufficiently large~$n$,%
~$f(n) \geq n - c \log n$ for some fixed constant~$c$,
and (iii)~it does not hold that, for sufficiently large~$n$,%
~$f(n) \leq n - s(n) \log n$, for some unbounded, nondecreasing,
computable function~$s : \mathbb{N} \rightarrow \mathbb{N}$.
We show that these assumptions lead to a contradiction.
This can only be the case if for each nondecreasing,
unbounded, computable function~$s$ it holds that there
is no~$n_s \in \mathbb{N}$ such that for all~$n \geq n_s$,%
~$f(n) \leq n - s(n) \log n$.
Then, consider the function~$t'(n) = \frac{n-f(n)}{\log n}$.
Since~$f(n)$ is computable, so is~$t'$.
Moreover, since for sufficiently large values of~$n$,%
~$f(n) < n-\log n$, we know that~$t'$ is unbounded.
We then construct the function~$t''$ similarly to the function~$s''$
defined above (replacing~$s'$ by~$t'$ in the definition).
We then have that for all~$n \in \mathbb{N}$ it holds that~$t''(n) \geq t'(n)$,
and~$t''$ is unbounded, nondecreasing, and computable.
Now, finally, consider the function~$f'(n) = n - t''(n) \log n$.
Since for all~$n \in \mathbb{N}$ it holds that~$t''(n) \geq t'(n)$,
we know that for all~$n$ it also holds that~$f'(n) \leq f(n)$.
Therefore, it must hold that for each nondecreasing, unbounded,
computable function~$s$ there is no~$n_s \in \mathbb{N}$
such that for all~$n \geq n_s$ it holds that~$f'(n) \leq n - s(n) \log n$.
However, we know that for all~$n \geq 0$ it holds
that~$f'(n) = n - t''(n) \log n$, which is a contradiction.
Therefore, we can rule out the second possibility.
\end{proof}

\subsubsection{Sublinear functions}
We turn our attention to sublinear functions~$f$.
We begin with proving a technical lemma.

\begin{lemma}
\label{lem:small-f}
Let~$f : \mathbb{N} \rightarrow \mathbb{N}$
be a nondecreasing function
that is~$o(n)$,
and let~$b \in \mathbb{N}$ be a positive integer.
Moreover, let~$G = (V,E)$ be an instance of~${\sf R_f}$.
In polynomial time,
we can produce some~$b' \geq b$ and we can
transform~$G$ into an equivalent instance~$G' = (V',E')$
of ${\sf R_f}$ with~$n'$ vertices such that~$f(n') \leq b'$.
\end{lemma}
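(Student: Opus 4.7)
The plan is to construct $G'$ by adjoining to $G$ a universal clique on $c$ new vertices together with $\ell$ isolated vertices, choosing $c$, $\ell$, and $n' = n + c + \ell$ so that the augmentation exactly cancels the rise of the threshold function $f$. The bound $b'$ will be set to $\max(b, f(n))$, which is automatically at least~$b$.

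Concretely, I would first evaluate $f(n)$ in polynomial time, using the standing assumption that $f$ is polynomial-time computable, and put $b' := \max(b, f(n))$. Next, I would pick some $n' \geq n$ satisfying (i)~$f(n') \leq b'$ and (ii)~$n' - f(n') \geq n - f(n)$. The simplest choice is $n' := n$, which gives $c = \ell = 0$ and $G' = G$; one may also pick $n'$ strictly larger than~$n$, in which case condition~(ii) is ensured by the sublinearity of $f$, since $f \in o(n)$ forces $n' - f(n')$ to grow asymptotically like~$n'$. Then I set $c := f(n') - f(n)$ and $\ell := n' - n - c$, both nonnegative by construction, and let $G'$ be the disjoint union of $G$, a clique $K_c$ on $c$ new vertices each made adjacent to every vertex of~$G$, and $\ell$ further isolated vertices.

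For correctness, observe that a maximum clique in $G'$ is obtained by taking all of $K_c$ together with a maximum clique of~$G$, since $K_c$ is universally connected to $V(G)$ and the isolated vertices contribute nothing. Hence the clique number of $G'$ equals that of $G$ plus~$c$, and $G' \in {\sf R_f}$ iff the clique number of~$G$, plus $c$, is at least $f(n')$. Substituting $c = f(n') - f(n)$, this simplifies to the clique number of $G$ being at least $f(n)$, i.e.,~$G \in {\sf R_f}$. The inequality $f(n') \leq b'$ holds by our choice of~$n'$, and every step is computable in polynomial time (searching for $n'$ can be done by trying values in a polynomial range, falling back to $n' = n$ if no better choice presents itself).

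The main subtlety is that the augmentation must simultaneously preserve equivalence (which forces the algebraic identity $c = f(n') - f(n)$) and yield $\ell \geq 0$ whenever $n' > n$ (which is where sublinearity enters). If one is satisfied with $n' = n$, the lemma becomes essentially an observation; the interesting content lies in the freedom to take $n'$ considerably larger than~$n$, which subsequent lemmas will exploit when reducing to ${\sf R_f}$ in order to bound the effective clique threshold.
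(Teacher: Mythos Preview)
Your proof is correct: taking $n' = n$ and $b' = \max(b, f(n))$ already discharges the lemma as stated, and you rightly note that this makes it essentially an observation. The paper takes a different route: it always enlarges the instance, adding $\delta(n) = n + n_0$ universal vertices (where $n_0$ is chosen so that the sublinearity witness $s$ satisfies $s(m) \geq 2$ for $m \geq n_0$) and setting $b' = b + \delta(n)$; the calculation $f(2n+n_0) \leq (2n+n_0)/2 \leq b + \delta(n)$ then yields $f(n') \leq b'$ under the extra constraint $b' - b = n' - n$, which is what the later reduction from \kclique{} actually exploits but which the lemma's literal wording does not demand. Your two-type padding with $c = f(n') - f(n)$ universal vertices plus $\ell$ isolated ones is in fact more careful about equivalence in ${\sf R_f}$ than the paper's all-universal padding (since $f$ need not be $1$-Lipschitz, adding only universal vertices does not in general preserve ${\sf R_f}$-membership). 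However, by fixing $b' = \max(b, f(n))$ up front you cannot actually select any $n' > n$ once $f$ is strictly increasing, so the ``interesting content'' you anticipate is not delivered by the algorithm as written; letting $b'$ grow with $n' - n$, as the paper does, would combine cleanly with your construction and give the stronger statement the application needs.
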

\begin{proof}
Let~$s$ be the unbounded, nondecreasing, computable function
such that~$f(n) \leq n/s(n)$ for sufficiently large~$n$.
If~$f(n) \leq b$, we can let~$G' = G$.
Therefore, assume that~$f(n) > b$.
We will increase~$n$ and~$b$,
by adding a polynomial number of `dummy' vertices
that are connected to all other vertices
(and increasing~$b$ by an equal amount).
It is straightforward to see that such a transformation
results in an equivalent instance.
Since~$s$ is nondecreasing and unbounded, we know there
exists some~$n_0 \in \mathbb{N}$ such that for all~$n \geq n_0$
it holds that~$s(n) \geq 2$.
Now, we define the function~$\delta(n) = n + n_0$.
Clearly,~$\delta$ is polynomial-time computable.
We show that for all~$n,b \in \mathbb{N}$ it holds
that~$f(n + \delta(n)) \leq b + \delta(n)$:
\[ \begin{array}{r l}
  f(n + \delta(n)) & = f(2n + n_0) \leq \frac{2n + n_0}{s(2n + n_0)}
\leq \frac{2n + n_0}{2} = n + \frac{n_0}{2} \\[5pt]
 & \leq n + n_0 \leq
b + n + n_0 = b + \delta(n). \\
\end{array} \]
Now, let~$b' = b + \delta(n)$.
Then, if we add~$\delta(n)$ many vertices to~$G$
that are connected to all other vertices,
we get an instance~$G' = (V',E')$ with~$n'=n + \delta(n)$ vertices
such that~$f(n') \leq b'$.
\end{proof}

Using this lemma, we can now show that
for any nondecreasing, unbounded, computable function~$f$
that is sublinear, the Ramsey quantifier~${\sf R_f}$ is
intractable (unless the ETH fails).

\begin{proposition}
\label{prop:fclique-lower-bound1}
Let~$f : \mathbb{N} \rightarrow \mathbb{N}$
be a nondecreasing, unbounded, computable function
that is~$o(n)$.
Then ${\sf R_f}$ is not solvable in polynomial time,
unless the ETH fails.
\end{proposition}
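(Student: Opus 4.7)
The plan is to show that if~${\sf R_f}$ were solvable in polynomial time, then~$k$-\textsc{clique} would admit an FPT algorithm running in time~$g(k) \cdot n^{O(1)}$; since any such running time is of the form~$g(k) \cdot n^{o(k)}$ (a constant exponent~$c$ satisfies~$c \leq k/\lfloor k/c \rfloor$, where~$\lfloor k/c \rfloor$ is nondecreasing and unbounded in~$k$), this would contradict the ETH via Theorem~\ref{thm:kclique-eth}.

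Given an instance~$(G,k)$ of~$k$-\textsc{clique} with~$|V(G)| = n$, I would first choose a suitable target size~$N$. Because~$f$ is unbounded and nondecreasing, the value~$N_0 := \min\SB m \SM f(m) \geq k \SE$ exists and can be located by evaluating~$f(1), f(2), \ldots$ in turn. Because~$f$ is~$o(n)$, there is some constant~$n_1$, depending only on~$f$, such that~$f(N) \leq N/2$ for all~$N \geq n_1$ (take~$n_1$ large enough that the witnessing function~$\iota$ satisfies~$\iota(n_1) \geq 2$). Setting~$N := \max(N_0, n_1, 2n)$ then guarantees both~$f(N) \geq k$ and~$N - f(N) \geq n - k$.

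Next, I would construct~$G^*$ on~$N$ vertices as follows: take~$G$, add~$d := f(N) - k$ \emph{universal} vertices (each adjacent to every other vertex of~$G^*$, including each other), and add~$u := N - n - d$ isolated vertices. By the choice of~$N$, both~$d$ and~$u$ are nonnegative. The~$u$ isolated vertices lie in no clique of size~$\geq 2$, while the~$d$ universal vertices lie in every maximum clique, so the maximum clique of~$G^*$ has size exactly (maximum clique of~$G$)~$+ d$. Consequently,~$G^* \in {\sf R_f}$ iff the maximum clique of~$G^*$ has size at least~$f(N)$, iff the maximum clique of~$G$ has size at least~$f(N) - d = k$. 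Invoking the assumed polynomial-time algorithm for~${\sf R_f}$ on~$G^*$ thus decides~$k$-\textsc{clique}; locating~$N_0$, constructing~$G^*$, and running the oracle all take time~$\mtext{poly}(N) = \mtext{poly}(N_0) + \mtext{poly}(n)$, which is of the required FPT form.

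The main obstacle is arranging the padding so that the implicit ${\sf R_f}$-threshold~$f(N)$ lands exactly on the desired clique size. The key idea is to decouple raising~$f(N)$ (achieved by enlarging~$N$) from raising the clique number of~$G^*$ (controlled by the split of added vertices between universal and isolated), which gives enough freedom to solve~$d = f(N) - k$ and~$u = N - n - d$ simultaneously. Unboundedness of~$f$ produces~$N_0$, and~$f \in o(n)$ guarantees there is room for the isolated vertices at the cost of at most roughly doubling~$n$; the assumption that~$f$ is polynomial-time computable ensures that~$N_0$, and hence~$N$, can be found efficiently.
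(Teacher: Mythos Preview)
Your proof is correct and follows essentially the same approach as the paper: reduce $k$-\textsc{clique} to ${\sf R_f}$ by padding with universal and isolated vertices so that the threshold $f(N)$ lands on the desired clique size, then observe that the resulting FPT algorithm for $k$-\textsc{clique} runs in time $g(k)\,m^{o(k)}$ and hence violates the ETH via Theorem~\ref{thm:kclique-eth}. Your choice $N = \max(N_0, n_1, 2n)$, using a fixed $n_1$ with $f(N) \leq N/2$ for $N \geq n_1$, is slightly more direct than the paper's version, which separates out the ``make room for the padding'' step into Lemma~\ref{lem:small-f}; otherwise the two arguments coincide.
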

\begin{proof}
In order to prove our result, we will assume that
${\sf R_f}$ is solvable in polynomial time,
and then show that the ETH fails.
In particular, we will show that \kclique{} is solvable
in time~$f'(k)m^{o(k)}$, which
implies the failure of the ETH by Theorem~\ref{thm:kclique-eth}.

Firstly, we will define a function~$f^{-1}$ as follows.
We let:
\[ f^{-1}(h) = \min \SB q \SM f(q) \geq h \SE. \]
Since~$f$ is an unbounded nondecreasing function,
we get that~$f^{-1}$ is an unbounded nondecreasing function as well.

We give an algorithm that solves \kclique{}
in the required amount of time.
Let~$(G,k)$ be an instance of \kclique{}, where~$G = (V,E)$
is a graph with~$n$ vertices.
Let~$m$ denote the size of~$G$ (in bits).
Intuitively, we will add exactly the right number of `dummy' vertices to~$G$,
resulting in a graph~$G' = (V',E')$,
to make sure that~$f(n') = k$ where~$n' = |V'|$
(while ensuring that~$G$ has a $k$-clique if and only
if~$G'$ has a $k$-clique).
To be more precise, we will construct a number~$k'$
such that~$f(n') = k'$ and such that~$G$ has a $k$-clique
if and only if~$G'$ has a $k'$-clique.
Consider the number~$q = f^{-1}(k)$,
and define~$\ell = f(q) - k$.
By definition of~$f^{-1}$, we know
that~$f(f^{-1}(k)) \geq k$, and thus that~$\ell \geq 0$.
We may assume without loss of generality that~$\ell \leq q-n$
and thus that~$0 \leq \ell \leq q-n$.
If this were not the case, we could invoke Lemma~\ref{lem:small-f}
(by taking~$b = q-n$)
to increase~$q$ to a number~$q'$ (and update~$\ell$ to~$\ell'$
accordingly) such that~$q'-n \geq q-n \geq f(q') \geq f(q') - k = \ell'$.

We now construct~$G'$ from~$G$
by adding~$q-n$ many new vertices,
where~$\ell$ of them are connected in~$G'$
to all existing vertices in~$G$,
and the remaining new vertices are not connected to any other vertex.
We then get that~$n' = q$,
and we let~$k' = f(n') = f(q)$.
It is now straightforward to verify that~$G$ has a $k$-clique
if and only if~$G'$ has a $k'$-clique,
and that the size of~$G'$ is at most~$f^{-1}(k)m^c$
for some constant~$c$. 

Now that we constructed~$G'$, we can use our polynomial-time
algorithm to check whether~$G' \in {\sf R_f}$,
which is the case if and only if~$(G,k) \in \kclique{}$.
This takes an amount of time that is polynomial in the size~$m'$
of~$G'$.
Since~$m' \leq f^{-1}(k)m^c$ for some constant~$c$,
the combined algorithm of producing~$G'$
and deciding whether~$G' \in {\sf R_f}$
takes time~$f'(k)m^{d}$ for some function~$f'$
and some constant~$d$.
From this we can conclude that \kclique{}
is solvable in time~$f'(k) m^{d} = f'(k) m^{o(k)}$.
Therefore, by Theorem~\ref{thm:kclique-eth},
the ETH fails.
\end{proof}

We point out that the result of
Proposition~\ref{prop:fclique-lower-bound1}
actually already follows from a known result
\cite[Theorem~5.5]{ChenHuangKanjXia06}.
For the sake of clarity, we included a self-contained proof
of this statement anyway.


The class of sublinear functions
as considered in the result of Proposition~\ref{prop:fclique-lower-bound1},
also contains those functions~$f$ such that~$f(n) \leq n^{\epsilon}$,
for some constant~$\epsilon$ such that~$0 < \epsilon < 1$.

%

\begin{corollary}
Let~$f : \mathbb{N} \rightarrow \mathbb{N}$ be a unbounded,
computable function such that for
all~$n \in \mathbb{N}$,~$f(n) \leq n^{\epsilon}$ for some
constant rational number~$\epsilon$ such that~$0 < \epsilon < 1$.
Then~${\sf R_f}$ is not polynomial-time computable, unless the ETH fails.
\end{corollary}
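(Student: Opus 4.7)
The plan is to derive this corollary as an essentially immediate consequence of Proposition~\ref{prop:fclique-lower-bound1}. That proposition already gives the desired ETH-conditional lower bound for every nondecreasing, unbounded, computable function $f$ that is $o(n)$. The corollary's hypotheses give unboundedness and computability directly, and the paper's standing assumption that all functions under consideration are nondecreasing supplies the remaining structural property. So the only real work is to verify that any function $f$ with $f(n) \leq n^{\epsilon}$ for a fixed rational $0 < \epsilon < 1$ is sublinear in the sense defined in the preliminaries.

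To verify $f(n) \in o(n)$, I need to exhibit a computable, nondecreasing, unbounded $\iota : \mathbb{N} \rightarrow \mathbb{N}$ together with an $n_0$ such that $f(n) \leq n/\iota(n)$ for all $n \geq n_0$. A convenient choice is $\iota(n) = \max(1, \lfloor \log_2 n \rfloor)$, which is integer-valued, nondecreasing, unbounded, and computable. Since $1 - \epsilon > 0$, for all sufficiently large $n$ we have $\log_2 n \leq n^{1-\epsilon}$, and therefore
\[
f(n) \;\leq\; n^{\epsilon} \;=\; \frac{n}{n^{1-\epsilon}} \;\leq\; \frac{n}{\iota(n)}.
\]
This places $f$ in the hypothesis class of Proposition~\ref{prop:fclique-lower-bound1}, from which the corollary follows.

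There is no substantive obstacle in this argument; the heavy lifting (the reduction from \kclique{} via Lemma~\ref{lem:small-f} and the invocation of Theorem~\ref{thm:kclique-eth}) is already carried out in Proposition~\ref{prop:fclique-lower-bound1}. The corollary merely serves to record that the broad and natural family of functions polynomially bounded by $n^{\epsilon}$ for some $\epsilon < 1$ sits inside the sublinear regime, so the intractability result applies to it uniformly.
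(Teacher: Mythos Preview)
Your proposal is correct and follows essentially the same approach as the paper: both reduce directly to Proposition~\ref{prop:fclique-lower-bound1} by exhibiting a witness that $f(n)$ is $o(n)$. The only cosmetic difference is that the paper takes the more direct witness $s(n) = n^{1-\epsilon}$ (so that $f(n) \leq n^{\epsilon} = n/n^{1-\epsilon}$), whereas you route through the smaller $\iota(n) = \lfloor \log_2 n \rfloor$; both choices work.
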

\begin{proof}
Since~$f(n) \leq n^{\epsilon}$, we know
that~$f(n) \leq n / n^{1-\epsilon}$.
Then, because~$s(n) = n^{1-\epsilon}$ is a nondecreasing, unbounded
computable function,
we can apply Proposition~\ref{prop:fclique-lower-bound1}
to obtain the intractability of~${\sf R_f}$.
\end{proof}

\subsubsection{Linearly lower bounded functions that are not constant-log-bounded}
Next, we turn to another class of polynomial-time computable functions~$f$
for which~${\sf R_f}$ is not polynomial-time computable unless the ETH
fails.

\begin{proposition}
\label{prop:fclique-lower-bound2}
Let~$f : \mathbb{N} \rightarrow \mathbb{N}$ be a polynomial-time computable
function that is~$\Omega(n)$, and such that, for sufficiently large~$n$,
it holds that~$f(n) \leq n - s(n) \log n$,
for some nondecreasing, unbounded, computable function~$s$.
Then~${\sf R_f}$ is not polynomial-time solvable,
unless the ETH fails.
\end{proposition}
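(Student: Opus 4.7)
The plan is to reduce 3-SAT to ${\sf R_f}$ in sub-exponential time, so that a polynomial-time algorithm for ${\sf R_f}$ would yield a~$2^{o(n)}$-time algorithm for 3-SAT, contradicting the ETH. So assume for contradiction that ${\sf R_f}$ is polynomial-time computable.

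Given a 3-CNF formula~$\varphi$ on~$n$ variables, I first invoke the sparsification lemma to assume that~$\varphi$ has~$m = O(n)$ clauses. The standard 3-SAT-to-CLIQUE reduction produces a graph~$H$ on~$3m$ vertices such that~$\varphi$ is satisfiable iff~$H$ contains an~$m$-clique. I then pad~$H$ to a graph~$G'$ on~$n'$ vertices, where~$n'$ is the smallest integer satisfying the three conditions~$n' \geq 3m$,~$f(n') \geq m$, and~$n' - f(n') \geq 2m$. Such an~$n'$ exists and can be located by enumeration: the first two conditions hold once~$n'$ is linear in~$m$ (using that~$f$ is~$\Omega(n)$), and the third holds once~$s(n') \log n' \geq 2m$ (by the upper bound on~$f$). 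I construct~$G'$ by adding~$n' - 3m$ new vertices to~$H$, of which exactly~$d := f(n') - m$ are connected to every original vertex and to each other, while the remaining~$n' - 3m - d$ are isolated. The three conditions on~$n'$ guarantee~$0 \leq d \leq n' - 3m$, and a short check shows that~$G'$ contains an~$f(n')$-clique iff~$H$ contains an~$m$-clique, i.e., iff~$\varphi$ is satisfiable.

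The main obstacle is to show that constructing~$G'$ and running the assumed polynomial-time algorithm for~${\sf R_f}$ on it together take only~$2^{o(n)}$ time. The crucial bound is on~$n'$: let~$n^* := \min \{ q : s(q) \log q \geq 2m \}$. For large~$m$, the upper bound on~$f$ gives~$n^* - f(n^*) \geq s(n^*) \log n^* \geq 2m$, and the inequality~$s(n^*) \log n^* \geq 2m$ together with~$s$ being unbounded forces~$n^*$ to also satisfy the first two conditions on~$n'$; hence~$n' \leq n^*$. By minimality of~$n^*$, we have~$s(n^* - 1) \log(n^* - 1) < 2m$, so~$\log(n^* - 1) < 2m / s(n^* - 1)$. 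Since~$s$ is unbounded and~$n^* \to \infty$ as~$m \to \infty$, we have~$s(n^* - 1) \to \infty$, yielding~$\log n^* = o(m) = o(n)$. Therefore~$n' \leq n^* = 2^{o(n)}$, so both constructing~$G'$ and running the~${\sf R_f}$ algorithm take~$\mtext{poly}(n') = 2^{o(n)}$ time. This produces a~$2^{o(n)}$-time algorithm for 3-SAT, contradicting the ETH.
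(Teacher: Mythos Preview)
Your approach is essentially the same as the paper's: pad a CLIQUE instance with universal and isolated vertices so that the target clique size becomes exactly~$f(n')$, and bound the padding size subexponentially via the ``inverse'' of~$s(q)\log q$. The paper reduces directly from CLIQUE (citing that a~$2^{o(n)}$-time algorithm for CLIQUE already contradicts the ETH), which lets it skip the sparsification lemma and the 3-SAT$\to$CLIQUE step; otherwise the two arguments are parallel.

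There is one small gap. You claim that~$n^* = \min\{q : s(q)\log q \geq 2m\}$ automatically satisfies the first two conditions~$n^* \geq 3m$ and~$f(n^*) \geq m$, invoking only ``$s$ being unbounded''. This does not follow: from~$f(n) \geq cn$ and~$f(n) \leq n - s(n)\log n$ one gets~$s(n)\log n \leq (1-c)n$, so at~$n^*$ the inequality~$s(n^*)\log n^* \geq 2m$ yields only~$n^* \geq \frac{2m}{1-c}$ and hence~$f(n^*) \geq \frac{2cm}{1-c}$. If~$c < 1/2$ (take e.g.\ $f(n) = \lfloor n/4 \rfloor$, $s(n) = \lfloor 3n/(4\log n)\rfloor$), both conditions can fail at~$n^*$. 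The repair is immediate: since~$s(q)\log q$ is nondecreasing, condition~(3) persists for all~$q \geq n^*$, while conditions~(1) and~(2) hold once~$q \geq Cm$ for~$C = \max(3,\lceil 1/c\rceil)$; hence~$n' \leq \max(n^*, Cm) = 2^{o(n)}$, and the rest of your argument goes through unchanged. (Note also that conditions~(2) and~(3) together already imply~(1), so you could drop~(1).)
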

\begin{proof}
We show that a polynomial time algorithm to decide~${\sf R_f}$
can be used to show that deciding whether a given simple graph
(with~$n$ vertices) contains a clique of a given size~$m$
can be solved in subexponential time, i.e., in
time~$2^{o(n)} \mtext{poly}(|G|)$. This, in turn, implies the failure
of the ETH \cite{Impagliazzo:1999uq}.

Let~$G = (V,E)$ be a simple graph with~$n$ vertices. Moreover,
let~$m$ be a positive integer. We will add a certain number,~$\ell$,
of vertices to this graph, to obtain a new graph~$G'$. We will do
this in such a way that almost all of these new vertices
($\ell'$ of them) are connected to all other vertices.
Moreover, we will make sure that~$m + \ell \geq f(n + \ell)$.
Then we can choose~$\ell'$ in such a way
that~$m + \ell' = f(n + \ell)$ --
since~$f(n)$ is~$\Omega(n)$, we can choose~$\ell$
so that $f(n+\ell) \geq m$.
This allows us to use the polynomial time algorithm for~${\sf R_f}$ to
decide whether~$G$ contains a clique of size~$m$, since any clique
of size~$m + \ell'$ in~$G'$ corresponds to a clique of size~$m$
in~$G$.

We define the nondecreasing, unbounded function~$t$
(representing the `inverse' of~$s(n) \log n$) as follows.
Let~$t(n) = \max \SB h \SM s(h) \log h \leq n \SE$.
Since~$s(n) \log n$ grows strictly faster than~$\log n$,
we get that~$t(n)$ is subexponential, i.e.,~$t(n)$ is~$2^{o(n)}$.
Then, in order to ensure that~$m+\ell \geq n+\ell-s(n+\ell)\log(n+\ell)$,
we need that~$s(n+\ell)\log(n+\ell) \geq n-m$,
and thus that~$n+\ell \geq t(n-m)$.
Also, we need to ensure that~$m \leq f(n+\ell)$.
We know that~$f(n) \geq cn$ for some constant~$c$
and for sufficiently large values of~$n$.
We then choose~$\ell = \max \SBs t(n-m)-n, \frac{m}{c} \SEs = 2^{o(n)}$,
which satisfies the required properties.
Therefore, our reduction to~${\sf R_f}$ runs in subexponential-time.
Consequently, if we were to compose this reduction and the
(hypothetical) polynomial time algorithm~${\sf R_f}$, we could decide
whether~$G$ has a clique of size~$m$ in subexponential time,
and thus the ETH fails.
\end{proof}

\subsubsection{The Dichotomy Theorem}
We can now combine the results
of Corollary~\ref{cor:constant-log-bounded},
Lemma~\ref{lem:case-distinction},
Propositions~\ref{prop:n-min-log-n}--\ref{prop:fclique-lower-bound2},
and Theorem~\ref{PRam},
to get the dichotomy result that we were after
for Ramsey quantifiers based on nondecreasing, computable
functions~$f : \mathbb{N} \rightarrow \mathbb{N}$.

\begin{theorem}
Let~$f : \mathbb{N} \rightarrow \mathbb{N}$ be a nondecreasing,
computable function.
Then, assuming the ETH, the Ramsey quantifier~${\sf R_f}$
is polynomial-time computable if and only
if~$f$ is both (1)~polynomial-time computable and (2)~constant-log-bounded.
\end{theorem}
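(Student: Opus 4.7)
The statement is a biconditional, so the plan is to handle the two directions separately, drawing on the results developed earlier in the paper. The forward direction (sufficiency) is essentially immediate: if $f$ is polynomial-time computable and constant-log-bounded, then Corollary~\ref{cor:constant-log-bounded} directly tells us that~${\sf R_f}$ is polynomial-time computable. So the work is in the backward direction (necessity): assuming~${\sf R_f}$ is polynomial-time computable, derive that~$f$ must itself be both polynomial-time computable and constant-log-bounded.

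For the necessity direction, the first claim -- that~$f$ must be polynomial-time computable -- follows directly from Proposition~\ref{prop:ptime} by contraposition, with no ETH assumption needed. The second claim -- that~$f$ must be constant-log-bounded -- is where the case analysis of Lemma~\ref{lem:case-distinction} enters. The plan is to suppose~${\sf R_f}$ is polynomial-time computable, apply the lemma to split into its four cases, and verify that only the two ``constant-log-bounded'' cases can occur without contradicting the ETH.

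Concretely, Case~(1) of the lemma gives~$f(n) = O(1)$, which directly witnesses constant-log-boundedness via the first clause of the definition. Case~(4) gives~$f(n) \geq n - c \log n$ for sufficiently large~$n$, which witnesses constant-log-boundedness via the second clause. In Case~(2), $f$ is unbounded and~$o(n)$, so by Proposition~\ref{prop:fclique-lower-bound1} the polynomial-time computability of~${\sf R_f}$ implies that the ETH fails -- contradicting our assumption. In Case~(3), $f$ is~$\Omega(n)$ and eventually satisfies~$f(n) \leq n - s(n)\log n$ for an unbounded, nondecreasing, computable~$s$, and (using that~$f$ is polynomial-time computable, already established) Proposition~\ref{prop:fclique-lower-bound2} again forces the ETH to fail. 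Thus under the ETH only Cases~(1) and~(4) remain, both of which give constant-log-boundedness.

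There is no genuine obstacle here: the main technical work has already been done in Lemma~\ref{lem:case-distinction} (which guarantees the case split is exhaustive for arbitrary nondecreasing computable~$f$) and in Propositions~\ref{prop:fclique-lower-bound1} and~\ref{prop:fclique-lower-bound2} (which handle the two intractable cases). The only point to double-check when writing the proof is that the hypotheses of Proposition~\ref{prop:fclique-lower-bound2} are satisfied -- in particular that~$f$ is polynomial-time computable, which is why the necessity of polynomial-time computability should be established first via Proposition~\ref{prop:ptime} before invoking the case analysis. Once that ordering is respected, the dichotomy follows by simply assembling the pieces.
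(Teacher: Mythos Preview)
Your proposal is correct and follows essentially the same route as the paper's own proof: Corollary~\ref{cor:constant-log-bounded} for sufficiency, Proposition~\ref{prop:ptime} for the polynomial-time-computability requirement, and then the case split of Lemma~\ref{lem:case-distinction} combined with Propositions~\ref{prop:fclique-lower-bound1} and~\ref{prop:fclique-lower-bound2} to eliminate the non-constant-log-bounded cases under the ETH. Your observation that polynomial-time computability of~$f$ should be established (via Proposition~\ref{prop:ptime}) before invoking Proposition~\ref{prop:fclique-lower-bound2} is a nice point of care that the paper leaves implicit.
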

\begin{proof}
The result follows from Corollary~\ref{cor:constant-log-bounded},
Lemma~\ref{lem:case-distinction},
Propositions~\ref{prop:n-min-log-n}--\ref{prop:fclique-lower-bound2},
and Theorem~\ref{PRam}.
Due to Corollary~\ref{cor:constant-log-bounded}, which itself follows
from Proposition~\ref{prop:n-min-log-n} and Theorem~\ref{PRam},
we get that for any polynomial-time computable constant-log-bounded
function~$f$, the quantifier~${\sf R_f}$ is polynomial-time computable.
Conversely, for any function~$f$ that is not polynomial-time computable,
we know by Proposition~\ref{prop:ptime} that~${\sf R_f}$ is not polynomial-time
computable either.
Moreover, for any polynomial-time computable function~$f$ that is
not constant-log-bounded, we know by Lemma~\ref{lem:case-distinction},
that one of two possibilities is the case:
either (i)~$f(n)$ is unbounded and is~$o(n)$,
or (ii)~$f(n)$ is~$\Omega(n)$ and for sufficiently large~$n$ it holds
that~$f(n) \leq n - s(n) \log n$ for some unbounded, nondecreasing,
computable function~$s$.
In case~(i), we know by Proposition~\ref{prop:fclique-lower-bound1}
that~${\sf R_f}$ is not polynomial-time computable, unless the ETH fails.
Similarly, in case~(ii), we know by Proposition~\ref{prop:fclique-lower-bound2}
that~${\sf R_f}$ is not polynomial-time computable, unless the ETH fails.
\end{proof}

\section{Conclusions and Outlook}
We investigated the computational complexity of Ramsey quantifiers. We pointed out some natural tractable (i.e., bounded) and intractable (e.g., proportional) Ramsey quantifiers. These results motivate the search for a dichotomy theorem for Ramsey quantifiers. As a next step, assuming the ETH, we showed that there exist intermediate Ramsey quantifiers (that is, Ramsey quantifiers that are neither polynomial-time computable nor $\NP$-hard). This led to the question whether there exists a natural class of functions, and a notion of boundedness, for which (under reasonable complexity assumptions) the polynomial-time Ramsey quantifiers are exactly the bounded Ramsey quantifiers. We showed that  this is indeed the case. Our main result states that assuming the ETH, a Ramsey quantifier is polynomial-time computable if and only if it corresponds to a polynomial-time computable and constant-log-bounded function. 

An interesting topic for future research is related to identifying the weakest complexity-theoretic assumption that is needed to rule out polynomial-time solvability of~${\sf R_f}$ for different classes functions~$f$. For any function~$f(n) = qn$ for some constant fraction~$q$, the result that~${\sf R_f}$ is not polynomial-time solvable already follows from the weaker assumption that~$\P \neq \NP$. It is not clear what technical obstacles would have to be tackled to identify the exact class of functions~$f$ for which such an intractability result can be established based on the assumption that~$\P \neq \NP$.
It would be interesting to determine whether for the function~$f(n) = n - (\log n)^2$, for instance, it would be possible to prove that~${\sf R_f}$ is not polynomial-time solvable, based on the assumption that~$\P \neq \NP$.

Let us conclude with the following more logically oriented question. The classical property of boundedness plays a crucial role in the definability of polyadic generalized quantifiers. Hella, V\"{a}\"{a}n\"{a}nen, and Westerst{\aa}hl have shown that the Ramseyfication of $\Q$ is definable in $\fo(\Q)$ if and only if $\Q$ is bounded \cite{HVW97}. Moreover, in a similar way, defining ``joint boundedness'' for pairs of quantifiers $\Q_f$ and $\Q_g$ one can notice that $\Br(\Q_f, \Q_g)$ is definable in $\fo(\Q_f, \Q_g)$ \cite{HVW97} and, therefore, polynomial-time computable for polynomial functions $f$ and $g$.
In this paper we substitute the boundedness definition with the notion of constant-log-boundedness, where the bound on the upper side is replaced by $c \log n$. A natural direction for future research is whether this change leads to interesting descriptive results.

\section{Acknowledgments}
Ronald de Haan was supported by the European Research Council (ERC),
project 239962, and the Austrian Science Fund (FWF), project P26200.
Jakub Szymanik acknowledges Veni grant NWO 639.021.232 and the funding received from the European Research Council under the European Union's Seventh Framework Programme (FP/2007--2013)/ERC Grant Agreement n. STG 716230 CoSaQ.

\clearpage
\bibliographystyle{splncs}
\bibliography{jakub}

\end{document}